\newcommand{\cmark}{\ding{51}}%
\newcommand{\xmark}{\ding{55}}%
\newtheorem{lemma}{Lemma}
\newtheorem{proposition}{Proposition}
\newtheorem{definition}{Definition}
\newcommand{\I}{\mathcal{I}}
\newcommand{\Y}{\mathcal{Y}}
\newcommand{\bb}[1]{\bm{#1}}
\newcommand{\one}{\mathds{1}}
\newcommand{\chance}{\textsc{c}}
\newcommand{\connected}{\rightleftharpoons}
\newcommand{\seqSet}{\mathcal{Y}}
\newcommand{\rele}{\bowtie}
\newcommand{\emptyseq}{\varnothing}
\newcommand{\dnr}{\textcolor{black!80!white}{\scriptsize$\bigstar$}}
\renewcommand{\vec}[1]{\bm{#1}}
\newcommand{\vxi}{\vec{\xi}\team}
\newcommand{\vdual}{\vec{\gamma}}
\newcommand{\dual}{\gamma}
\newcommand{\cV}{\mathcal{V}}
\definecolor{col1}{HTML}{2f6c9e}
\definecolor{col2}{HTML}{d16800}
\newcommand{\tone}{_\text{\normalfont\textcolor{col1}{\sffamily T1}}}
\newcommand{\ttwo}{_\text{\normalfont\textcolor{col1}{\sffamily T2}}}
\newcommand{\team}{_\text{\normalfont\textcolor{col1}{\sffamily T}}}
\newcommand{\opp}{_\text{\normalfont\textcolor{col2}{\sffamily O}}}
\newcommand{\opppl}{\text{\textup{\textcolor{col2}{\sffamily O}}}}
\newcommand{\tonepl}{\text{\textup{\textcolor{col1}{\sffamily T1}}}} 
\newcommand{\ttwopl}{\text{\textup{\textcolor{col1}{\sffamily T2}}}}
\newcommand*\circled[1]{\tikz[baseline=(char.base)]{
            \node[shape=circle,draw,inner sep=1pt] (char) {\scriptsize #1};}}
\newcommand{\defeq}{\mathrel{:\mkern-0.25mu=}}
\DeclareMathOperator{\co}{co}
\DeclareMathOperator{\dep}{depth}
\DeclareMathOperator*{\argmax}{arg\,max}
\newcommand{\n}[1]{
  \xintifLt{#1}{1}{
      \num{\xinttheiexpr{#1 * 1000}\relax}ms%
  }{%
    \xintifLt{#1}{60}{%
      \num[round-mode=places,round-precision=2]{#1}s%
    }{
      \xintAssign\xintiiDivision{\xintNum{#1}}{3600}\to\hours\minutes%
      \xintAssign\xintiiDivision{\minutes}{60}\to\minutes\seconds%
      \xintiiifGt{\hours}{0}{
         \num{\hours}h \num[minimum-integer-digits=2]{\minutes}m %
      }{
         \num{\minutes}m \num[minimum-integer-digits=2]{\seconds}s%
      }%
    }%
  }%
}
\newcommand{\tl}{\textcolor{gray}{$> 6$h}}
\newcommand{\na}{\textcolor{gray}{---}}
\newcommand{\unk}{\na}
\newcommand{\vast}{\bBigg@{4}}
\title{Faster Algorithms for Optimal Ex-Ante Coordinated Collusive Strategies in Extensive-Form Zero-Sum Games}
\author{Gabriele Farina\\
Computer Science Department\\
Carnegie Mellon University\\
\texttt{gfarina@cs.cmu.edu}
\And Andrea Celli\\
DEIB\\
Politecnico di Milano\\
\texttt{andrea.celli@polimi.it}
\And Nicola Gatti\\
DEIB\\
Politecnico di Milano\\
\texttt{nicola.gatti@polimi.it}
\And Tuomas Sandholm\\
Computer Science Department, CMU\\
Strategic Machine, Inc.\\
Strategy Robot, Inc.\\
Optimized Markets, Inc.\\
\texttt{sandholm@cs.cmu.edu}
}
        \newcommand*\patchAmsMathEnvironmentForLineno[1]{%
          \expandafter\let\csname old#1\expandafter\endcsname\csname #1\endcsname
          \expandafter\let\csname oldend#1\expandafter\endcsname\csname end#1\endcsname
          \renewenvironment{#1}%
                           {\linenomath\csname old#1\endcsname}%
                           {\csname oldend#1\endcsname\endlinenomath}%
        }%
        \newcommand*\patchBothAmsMathEnvironmentsForLineno[1]{%
          \patchAmsMathEnvironmentForLineno{#1}%
          \patchAmsMathEnvironmentForLineno{#1*}%
        }%
\begin{document}
\twocolumn[
\maketitle
\begin{abstract}
We focus on the problem of finding an optimal strategy for a team of two players that faces an opponent in an imperfect-information zero-sum extensive-form game. Team members are not allowed to communicate during play but can coordinate before the game.  In that setting, it is known that the best the team can do is sample a profile of potentially randomized strategies (one per player) from a joint (a.k.a. correlated) probability distribution at the beginning of the game. 
In this paper, we first provide new modeling results about computing such an optimal distribution by drawing a connection to a different literature on extensive-form correlation. Second, we provide an algorithm that computes such an optimal distribution by only using profiles where only one of the team members gets to randomize in each profile. We can also cap the number of such profiles we allow in the solution. This begets an anytime algorithm by increasing the cap. We find that often a handful of well-chosen such profiles suffices to reach optimal utility for the team. This enables team members to reach coordination through a relatively simple and understandable plan. Finally, inspired by this observation and leveraging theoretical concepts that we introduce, we develop an efficient  column-generation algorithm for finding an optimal distribution for the team. We evaluate it on a suite of common benchmark games. It is three orders of magnitude faster than the prior state of the art on games that the latter can solve and it can also solve several games that were previously unsolvable. 
\end{abstract}
]
\section{Introduction}

Much of the computational game theory literature has focused on finding strong strategies for large two-player zero-sum extensive-form games. In that setting, perfect game playing corresponds to playing strategies that belong to a Nash equilibrium, and  such strategies can be found in polynomial time in the size of the game. Recent landmark results, such as superhuman agents for heads-up limit and no-limit Texas hold'em poker~\citep{Bowling15:Heads,Brown19:Superhuman,Moravvcik17:DeepStack} show that the problem of computing strong strategies in two-player zero-sum games is well understood both in theory and in practice. The same cannot be said for almost any type of strategic multi-player interaction, where computing strong strategies is generally hard in the worst case. Also, all superhuman AI gaming milestones have been in two-player zero-sum games, with the exception of multi-player no-limit Texas hold'em recently~\cite{Brown19:Superhuman}. 

In this paper, we study \emph{adversarial team games}, that is, games in which a team of coordinating (colluding) players faces an opponent. We will focus on a two-player team coordinating against a third player. Team members can plan jointly at will before the game, but are not allowed to communicate during the game (other than through their actions in the game). These games are a popular middle ground between two-player zero-sum games and multiplayer games~\citep{Stengel97:Team,Celli18:Computational}. They can be used to model many strategic interactions of practical relevance. For example, how should two players colluding against a third at a poker table play? Or, how would the two defenders in Bridge (who are prohibited from communicating privately during the game) play optimally against the declarer?
Even though adversarial team games are conceptually zero-sum interactions between two entities---the team and the opponent---computing optimal strategies is hard in this setting. Even finding a best-response strategy for the team given a fixed strategy for the opponent is hard~\cite{Celli18:Computational}.

One might think that finding the optimal strategy for the team simply amounts to finding an optimal profile of potentially mixed (a.k.a. randomized) strategies, one strategy per team members. A solution of this type that yields maximum expected sum of utilities for the team players against a rational (that is, best-responding) opponent is known as a \emph{team-maxmin equilibrium} (TME) strategy~\citep{Basilico17:Team,Zhang20:Computing,Zhang20:Converging}. 

In this paper, we are interested in a more powerful model. Before the game starts, the team members are able to sample a profile from a joint (a.k.a. correlated) \emph{distribution}. This form of \emph{ex-ante coordination} is known to be the best a team can do and comes with two major advantages. First, it offers the team larger (or equal) expected utility than TME---sometimes with dramatic gains~\citep{Celli18:Computational}. Second, it makes the problem of computing the optimal team strategy convex---and thus more amenable to the plethora of convex optimization algorithms that have been developed over the past 80 years---whereas the problem of computing a TME strategy is not convex. In our model, an optimal distribution for the team is known as a \emph{team-maxmin equilibrium with coordination device} (TMECor) strategy~\cite{Celli18:Computational,Farina18:Ex}. Finding a TMECor strategy is \textsf{NP}-hard and inapproximable~\cite{Celli18:Computational}.

We propose a new formulation for the problem of finding a TMECor strategy. In doing so, we introduce the key notion of a \emph{semi-randomized correlation plan} and draw connections with a particular strategy polytope defined by~\citet{Stengel08:Extensive}.
Second, we propose an algorithm for computing a TMECor strategy when only a fixed number of pairs of semi-randomized correlation plans is allowed. This begets an anytime algorithm by increasing that fixed number. We find that often a handful of well-chosen semi-randomized correlation plans is enough to reach optimal utility. This enables team members to reach coordination through simple and understandable strategies. 
%
Finally, by leveraging the  theoretical concepts that we introduce, we develop an efficient optimal column-generation algorithm for finding a TMECor strategy. We evaluate it on a suite of common benchmark games. It is three orders of magnitude faster than the prior state of the art on games that the latter can solve. It can also solve many games that were previously unsolvable.

\section{Preliminaries: Extensive-Form Games}\label{sec:prel}

Extensive-form games (EFGs) are a standard model in game theory. They model games that are played on a game tree, and can capture both sequential and simultaneous moves, as well as private information. In this paper, we focus on three-player zero-sum games where two players---$\textcolor{col1}{\textnormal{\sffamily T1}}$ and $\textcolor{col1}{\textnormal{\sffamily T2}}$---play as a team against the opponent player, denoted by \opppl{}.

Each node $v$ in the game tree belongs to exactly one player $i\in\{\tonepl,\ttwopl,\opppl\}\cup\{\chance\}$ whose turn is to move.
Player $\chance$ is a special player, called the {\em chance player}. It models exogenous stochasticity in the environment, such as drawing a card from a deck or tossing a coin.
The edges leaving $v$ represent the actions available at that node. Any node without outgoing edges is called a {\em leaf} and represents an end state of the game. We denote the set of such nodes by $Z$. Each $z\in Z$ is associated with a tuple of payoffs specifying the payoff $u_i(z)$ of each player $i\in\{\tonepl,\ttwopl,\opppl\}$ at $z$.
The product of the probabilities of all actions of \chance{} on the path from the root of the game to leaf $z$ is denoted by $p_\chance(z)$.

Private information is represented via {\em information set} (infoset). In particular, the set of nodes belonging to $i\in\{\tonepl,\ttwopl,\opppl\}$ is partitioned into a collection $\I_i$ of non-empty sets: each $I\in\I_i$ groups together nodes that Player $i$ cannot distinguish among, given what they have observed. 
Necessarily, for any $I\in\I_i$ and $v,w\in I$, nodes $v$ and $w$ must have the same set of available actions. Consequently, we denote the set of actions available at all nodes of $I$ by $A_I$.
As it is customary in the related literature, we assume {\em perfect recall}, that is, no player forgets what he/she knew earlier in the game.
Finally, given players $i$ and $j$, two infosets $I_i\in \I_i$, $I_j\in\I_j$ are {\em connected}, denoted by $I_i\connected I_j$, if there exist $v\in I_i$ and $w\in I_j$ such that the path from the root to $v$ passes through $w$ or vice versa.

\noindent\textbf{Sequences. }
The set of {\em sequences} of Player $i$, denoted by $\Sigma_i$, is defined as $\Sigma_i\defeq\mleft\{(I,a):I\in\I_i,a\in A_I\mright\}\cup\{\varnothing\}$, where the special element $\varnothing$ is called the {\em empty sequence} of Player $i$.
The {\em parent sequence} of a node $v$ of Player $i$, denoted $\sigma(v)$, is the last sequence (information set-action pair) for Player $i$ encountered on the path from the root of the game to that node. Since the game has perfect recall, for each $I\in\I_i$, nodes belonging to $I$ share the same {\em parent sequence}. So, given $I\in\I_i$, we denote by $\sigma(I)\in\Sigma_i$ the unique parent sequence of nodes in $I$. Additionally, we let $\sigma(I)=\varnothing$ if Player $i$ never acts before infoset $I$.

\noindent\textbf{Relevant sequences.}
A pair of sequences $\sigma_i\in\Sigma_i$, $\sigma_j\in\Sigma_j$ is {\em relevant} if either one is the empty sequence, or if the can be written as $\sigma_i=(I_i,a_i)$ and $\sigma_j=(I_j,a_j)$ with $I_i\connected I_j$. We write $\sigma_i \rele \sigma_2$ to denote that they form a pair of relevant sequences. Given two players $i$ and $j$, we let $\Sigma_i\rele\Sigma_j \defeq \{(\sigma_i,\sigma_j) : \sigma_i \in \Sigma_i, \sigma_j \in \Sigma_j, \sigma_i \rele\sigma_j\}$.
Similarly, given $\sigma_i$ and $I_j\in\I_j$, we say that $(\sigma_i,I_j)$ forms a relevant sequence-information set pair ($\sigma_i\rele I_j$), if $\sigma_i=\varnothing$ or if $\sigma_i=(I_i,a_i)$ and $I_i\connected I_j$.

\noindent\textbf{Reduced-normal-form plans.}
A \emph{reduced-normal-form} plan $\pi_i$ for Player $i$ defines a choice of action for every information set $I \in \I_i$ that is still reachable as a result of the other choices in $\pi$ itself. The set of reduced-normal-form plans of Player $i$ is denoted $\Pi_i$. We denote by $\Pi_i(I)$ the subset of reduced-normal-form plans that prescribe all actions for Player $i$ on the path from the root to information set $I \in \I_i$.
Similarly, given $\sigma=(I,a)\in\Sigma_i$, let $\Pi_i(\sigma)\subseteq \Pi_i(I)$ be the set of reduced-normal-form plans belonging to $\Pi_i(I)$ where Player $i$ plays action $a$ at $I$, and let $\Pi_i(\emptyseq)\defeq \Pi_i$.
Finally, given a leaf $z \in Z$, we denote with $\Pi_i(z) \subseteq \Pi_i$ the set of reduced-normal-form where Player $i$ plays so as to reach $z$.

\noindent\textbf{Sequence-form strategies.}
A \emph{sequence-form strategy} is a compact strategy representation for perfect-recall players in EFGs~\citep{Romanovskii62:Reduction,Koller96:Efficient}. Given a player $i\in\{\tonepl,\ttwopl,\opppl\}$ and a normal-form strategy $\mu\in\Delta(\Pi_i)$,\footnote{$\Delta(X)$ denotes the probability simplex over the finite set $X$.}
the sequence-form strategy induced by $\mu$ is the real vector $\bb{y}$, indexed over $\sigma\in\Sigma_i$, defined as $y[\sigma]\defeq \sum_{\pi\in\Pi_i(\sigma)}\mu(\pi)$.
The set of sequence-form strategies that can be induced as $\mu$ varies over $\Delta(\Pi_i)$ is denoted by $\seqSet_i$ and is known to be a convex polytope (called the {\em sequence-form polytope}) defined by a number of constraints equal to $|\I_i|$~\citep{Koller96:Efficient}.
\section{TMECor Formulation and Prior Work}
A TMECor strategy is a probability distribution $\mu\team$ over the set of randomized strategy profiles $\Y\tone\times \Y\ttwo$ that guarantees maximum expected utility for the team against the best-responding opponent \opppl{}. Since each player has perfect recall, any randomized strategy for a player is equivalent to a distribution over reduced-normal-form pure strategies~\citep{Kuhn53:Extensive}. Hence, any distribution over profiles of randomized strategies of the team members can be expressed in an equivalent way as a distribution over \emph{deterministic} strategy profiles $\Pi\tone\times\Pi\ttwo$. The benefit of this transformation is that $\Pi\tone \times\Pi\ttwo$ is a finite set, unlike $\Y\tone\times\Y\ttwo$. For this reason, TMECor is usually defined in the literature as a distribution over $\Pi\tone \times\Pi\ttwo$ without loss of generality. We will follow the same approach in our characterization.

\noindent {\bf TMECor as a Bilinear Saddle-Point Problem. }
For each leaf $z$, let $\hat{u}\team(z)\defeq (u\tone(z)+u\ttwo(z))p_\chance(z)$. The expected utility of the team can be written as the following function of the distributions of play $\mu\team \in \Delta(\Pi\tone\times\Pi\ttwo),\mu\opp \in \Delta(\Pi\opp)$:
\[
    u\team(\mu\team, \mu\opp)\!\defeq\!\sum_{z \in Z}\! \hat{u}\team(z) \!\mleft(\!\sum_{\substack{\pi\tone \in \Pi\tone(z)\\\pi\ttwo \in \Pi\ttwo(z)}\hspace{-9mm}} \hspace{-0cm}\mu\team(\pi\tone,\pi\ttwo)\!\mright)\hspace{-.1cm}\mleft(\!\sum_{{\pi \in \Pi\opp(z)}\hspace{-5.5mm}} \mu\opp(\pi)\!\mright)\!.
\]
By definition, a \emph{team-maxmin equilibrium with coordination device} (TMECor) is a Nash equilibrium of the game where the team plays according to the coordinated strategy $\mu\team\in\Delta(\Pi\tone\times\Pi\ttwo)$. In the zero-sum setting, this amounts to finding a solution of the optimization problem
\begin{equation}\label{eq:tmecor_basic}
    \displaystyle\argmax_{\mu\team \in \Delta(\Pi\tone \times \Pi\ttwo)}\min_{\mu\opp\in \Delta(\Pi\opp)} u\team(\mu\team,\mu\opp).
\end{equation}

The opponent's strategy $\mu\opp$ can be compactly represented through its equivalent sequence-form representation.
This is not the case for $\mu\team$, which cannot be represented concisely through the sequence form as shown by~\citet{Farina18:Ex}. 


\noindent {\bf Prior algorithms. } 
Prior work on the computation of TMECor mainly differs in the way the team's distribution $\mu\team$ is represented. 
\citet{Celli18:Computational} directly represent the strategy as a probability distribution over the set of joint reduced-normal-form plans $\Pi\tone\times\Pi\ttwo$. The number of bits required to store such a distribution is exponential in the size of the game tree in the worst case. They propose a column-generation approach in which, at each iteration, a new pair of pure strategies is added to the support of the distribution $\mu\team$. 
\citet{Farina18:Ex} show that it suffices to employ $|Z|$-dimensional \emph{vectors of realizations} where each $z$ is mapped to its probability of being reached when the team follows $\mu\team$. 
    A \emph{realization-form strategy} is a more concise representation than the original distribution $\mu\team$. The authors propose a structural decomposition of the problem which is then used to prove convergence of a fictitious-play-like algorithm. 
\section{A Formulation of TMECor Based on Extensive-Form Correlation Plans}

%
%
%
%


We propose using a different representation of the correlated distribution of play $\mu\team$, inspired by the growing body of literature on extensive-form correlated equilibria. Like the \textit{realization form} by \citet{Farina18:Ex}, in our approach we represent $\mu\team$ as a vector with only a polynomial number of components. However, unlike the realization form, the number of components scales  as the product of the number of sequences of the two players, which can be significantly larger than the number of leaves. This downside is amply outweighed by the following benefits.
First, we show that in practice our proposed representation of $\mu\team$ enables us to compute best responses for the team significantly faster than the prior representations.
Second, in certain classes of games, we even show that our proposed representation enables the computation of a TMECor in polynomial time. This is the case, for example, in Goofspiel, a popular benchmark game in computational game theory~\citep{Ross71:Goofspiel}.

\subsection{Extensive-Form Correlation Plans}

Our representation is based on the concept of \emph{extensive-form correlation plans}, introduced by \citet{Stengel08:Extensive} in their seminal paper on extensive-form correlation. In particular, we map the correlated distribution of play $\mu\team$ of the team to the vector $\vec{\xi}\team$ indexed over pairs of sequences $(\sigma\tone,\sigma\ttwo) \in \Sigma\tone\rele\Sigma\ttwo$, where each entry is defined as
\begin{equation}\label{eq:xi def}
	\xi\team[(\sigma\tone,\sigma\ttwo)]\defeq \sum_{\substack{\pi\tone\in\Pi\tone(\sigma\tone)\\\pi\ttwo\in\Pi\ttwo(\sigma\ttwo)}} \mu\team[(\pi\tone,\pi\ttwo)]. 
\end{equation}

Here $\vec{\xi}\team$ is not indexed over \emph{all} pairs  of sequences $(\sigma\tone,\sigma\ttwo)$---only \emph{relevant} sequence pairs. While there are games in which this distinction is meaningless (that is, games in which all sequences pairs for the team members are relevant), in practice the number of relevant sequence pairs is only a tiny fraction of the total number of sequence pairs, as shown in \cref{table:rele vs prod}(b).

\begin{table*}
    \centering
    \sisetup{scientific-notation=false,round-mode=places,
round-precision=2}
    \setlength{\tabcolsep}{4.0pt}\small\centering
    \begin{tikzpicture}
    \node[anchor=west] at (0,0) {
    \begin{tabular}{lrrrr|||rr|||ccc}
        \toprule
        \multirow{2}{*}{\makebox[.6cm]{~}\bf Game instance} & \multicolumn{3}{c}{\bf Num. sequences} & \bf Num. leaves & \multirow{2}{*}{\scriptsize{$\displaystyle\frac{|\Sigma\tone \rele \Sigma\ttwo|}{|Z|}$}} & \multirow{2}{*}{\scriptsize{$\displaystyle\frac{|\Sigma\tone \times \Sigma\ttwo|}{|\Sigma\tone \rele \Sigma\ttwo|}$}} & \multicolumn{3}{c}{\bf Triangle-free?} \\
        &$|\Sigma_1|$ & $|\Sigma_2|$ & $|\Sigma_3|$ & \multicolumn{1}{c}{$|Z|$}&&&\small$\opppl{}=1$&\small$\opppl{}=2$&\small$\opppl{}=3$\\
        \midrule
        \makebox[.6cm][l]{\textbf{[A]}}Kuhn poker (3 ranks) &\num{25}&\num{25}&\num{25} &  \num{78} & \num{3.3974} & \num{2.358} &\xmark&\xmark&\xmark\\
        \makebox[.6cm][l]{\textbf{[B]}}Kuhn poker (4 ranks) &\num{33}&\num{33}&\num{33} & \num{312} & \num{1.5929} & \num{2.1911} &\xmark&\xmark&\xmark \\
        \makebox[.6cm][l]{\textbf{[C]}}Kuhn poker (12 ranks) &\num{97}&\num{97}&\num{97} & \num{17160} & \num{0.288} & \num{1.9027} &\xmark&\xmark&\xmark\\
        \midrule
        \makebox[.6cm][l]{\textbf{[D]}}Goofspiel (3 ranks, limited info) & \num{934} & \num{934} & \num{934} & \num{1296} & \num{9.5355} & \num{70.588} &\cmark&\cmark&\cmark\\
        \makebox[.6cm][l]{\textbf{[E]}}Goofspiel (3 ranks) &\num{1630}&\num{1630}&\num{1630} & \num{1296} & \num{15.53} & \num{131.96} &\cmark&\cmark&\cmark \\
        \midrule
        \makebox[.6cm][l]{\textbf{[F]}}Liar's dice (3 faces) &\num{1021}&\num{1021}&\num{1021} & \num{13797} & \num{5.26} & \num{14.42} &\xmark&\xmark&\xmark \\
        \makebox[.6cm][l]{\textbf{[G]}}Liar's dice (4 faces) &\num{10921}&\num{10921}&\num{10921} & \num{262080} & \num{6.2518} & \num{72.79} &\xmark&\xmark&\xmark\\
        \midrule
        \makebox[.6cm][l]{\textbf{[H]}}Leduc poker (3 ranks, 1 raise) &\num{457}&\num{457}&\num{457} & \num{4500} & \num{2.6215} & \num{17.703568703907773} &\xmark&\xmark&\xmark\\
        \makebox[.6cm][l]{\textbf{[I]}}Leduc poker (4 ranks, 1 raise) & 801 &801&801 & 16908 & \num{1.338123964986988} & \num{28.358055248618785} &\xmark&\xmark&\xmark\\
        \makebox[.6cm][l]{\textbf{[J]}}Leduc poker (2 ranks, 2 raises) & 1443 & 1443 & 1443 & 3786 & \num{7.275488642366614} & \num{75.594445452895262} &\xmark&\xmark&\xmark\\
        \bottomrule
    \end{tabular}};
    \draw[line width=2.3mm,white] (10.54,-2.6) -- (10.54,2.6);
    \draw[line width=2.3mm,white] (14.28,-2.6) -- (14.28,2.6);
    \node[anchor=north] at (5.1,-2.55) {\textbf{(a)} --- Game instances and sizes};
    \node[anchor=north] at (12.5,-2.55) {\textbf{(b)}};
    \node[anchor=north] at (16.0,-2.55) {\textbf{(c)}};
    \end{tikzpicture}\vspace{-2mm}
    \caption{\textbf{(a)} Size of the game instances used in our experiments, in terms of number of sequences $|\Sigma_i|$ for each player $i$, and number of leaves $|Z|$.
    \textbf{(b)} Ratio between the number of leaves $|Z|$, number of sequence pairs for the team members $|\Sigma\tone \times \Sigma\ttwo|$, and number of \emph{relevant} sequence pairs for the team members $|\Sigma\tone \rele \Sigma\ttwo|$ in various benchmark games. For all games reported in the subtable, we chose the first two players to act as the team members. \textbf{(c)} The subtable reports whether the interaction of the team members is triangle-free \citep{Farina20:Polynomial}, given the opponent player \opppl{}.}
    \label{table:rele vs prod}\vspace{-3mm}
\end{table*}

The set of extensive-form correlation plans $\vec{\xi}\team$ that can be induced as $\mu\team$ varies over the set of all correlated distributions of play for the team members is a convex polytope. We denote it as $\Xi\team$ and call it the \emph{polytope of correlation plans}. We will recall existing results and provide new ones about the structure of $\Xi\team$ in \cref{sec:srcp}. 

\subsection{Computing a TMECor using Correlation Plans}

Extensive-form correlation plans encode a \emph{superset} of the information encoded by realization plans. Indeed,  for all $z$,
$\xi\team[\sigma\tone(z),\sigma\ttwo(z)] = \rho\team[z].$
Using the previous identity, we can rewrite the problem of computing a TMECor of a constant-sum game \eqref{eq:tmecor_basic} as
\[
    \argmax_{\vec{\xi}\team \in \Xi\team} \min_{\vec{y}\opp \in \Y\opp} \sum_{z\in Z} \hat{u}\team(z) \xi\team[\sigma\tone(z),\sigma\ttwo(z)] y[\sigma\opp(z)].
\]
By dualizing the inner linear minimization problem over $\vec{y}\opp$, we get the following proposition that shows that a TMECor can be found as the solution to a linear program (LP) with a polynomial number of variables. (All the proofs of this paper can be found in the appendix.)
\begin{restatable}{proposition}{proptmecorlp}\label{prop:tmecor as lp}
    An extensive-form correlation plan $\vec{\xi}\team$ is a TMECor if and only if it is a solution to the LP
    \begin{equation*}
        \mleft\{\hspace{-1.25mm}\begin{array}{l}
            \displaystyle
            \argmax_{\vec{\xi}\team}~~~ v_\emptyseq,\quad\text{\normalfont subject to:}\\
            ~\circled{\normalfont 1}~ \displaystyle v_I - ~\sum_{\mathclap{\substack{I' \in \mathcal{I}\opp\\\sigma\opp(I') = (I, a)}}}~ v_{I'} \le ~\sum_{\mathclap{\substack{z \in Z\\\sigma\opp(z) = (I,a)}}} \hat{u}\team(z)\xi\team[\sigma\tone(z),\sigma\ttwo(z)]\\[-4mm]
            \hspace{5cm}\forall\, (I,\!a) \!\in\! \Sigma\opp\!\!\setminus\! \{\emptyseq\}\\[2mm]
            ~\circled{\normalfont 2}~\displaystyle v_\emptyseq - ~\sum_{\mathclap{\substack{I' \in \mathcal{I}\opp\\\sigma\opp(I') = \emptyseq}}}~ v_{I'} \le ~\sum_{\mathclap{\substack{z \in Z\\\sigma\opp(z) = \emptyseq}}} \hat{u}\team(z)\xi\team[\sigma\tone(z),\sigma\ttwo(z)]\\[8mm]
            ~\circled{\normalfont 3}~v_\emptyseq\textnormal{ free}, v_{I} \textnormal{ free } \quad \forall\ I \in \mathcal{I}\opp\\[.5mm]
            ~\circled{\normalfont 4}~\vec{\xi}\team \in \Xi\team.
        \end{array}\mright.
    \end{equation*}
\end{restatable}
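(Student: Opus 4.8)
The plan is to derive the LP as the exact dualization of the inner minimization in the bilinear saddle-point characterization of TMECor that immediately precedes the statement. Recall that we have already established that a TMECor corresponds to a solution of
\[
    \argmax_{\vec{\xi}\team \in \Xi\team} \min_{\vec{y}\opp \in \Y\opp} \sum_{z\in Z} \hat{u}\team(z)\, \xi\team[\sigma\tone(z),\sigma\ttwo(z)]\, y\opp[\sigma\opp(z)].
\]
The key observation is that, for a \emph{fixed} $\vec{\xi}\team$, the inner problem is a linear program in $\vec{y}\opp$ over the sequence-form polytope $\Y\opp$, which (by \citet{Koller96:Efficient}) is described by the $|\I\opp|$ flow-like equality constraints $y\opp[\emptyseq] = 1$ and, for every $I\in\I\opp$, $\sum_{a\in A_I} y\opp[(I,a)] = y\opp[\sigma\opp(I)]$, together with nonnegativity. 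So I would first write this inner LP explicitly with a dual variable $v_\emptyseq$ for the root constraint and a dual variable $v_I$ for each infoset constraint of the opponent, and then take its LP dual.

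The main computation is bookkeeping the dualization. Introduce the objective coefficient $c[\sigma\opp] \defeq \sum_{z : \sigma\opp(z) = \sigma\opp} \hat u\team(z)\,\xi\team[\sigma\tone(z),\sigma\ttwo(z)]$ attached to each opponent sequence $\sigma\opp$ (note a leaf $z$ contributes to exactly one sequence, namely $\sigma\opp(z)$). Then the inner minimization is $\min\{\, c^\top \vec y\opp : \vec y\opp \in \Y\opp \,\}$. Each primal variable $y\opp[\sigma\opp]$ appears in exactly two equality constraints: once on the left-hand side of the constraint for the infoset it immediately follows, and once (with the opposite sign) as the right-hand side $y\opp[\sigma\opp(I')]$ of the constraint for each infoset $I'$ with $\sigma\opp(I') = \sigma\opp$. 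Writing out the dual constraint for the variable $y\opp[(I,a)]$ therefore yields exactly $v_I - \sum_{I' : \sigma\opp(I') = (I,a)} v_{I'} \le c[(I,a)]$, which is constraint \circled{1}; the dual constraint coming from the root variable $y\opp[\emptyseq]$ gives $v_\emptyseq - \sum_{I' : \sigma\opp(I') = \emptyseq} v_{I'} \le c[\emptyseq]$, which is constraint \circled{2}. Since all primal constraints are equalities, every dual variable is free, giving \circled{3}, and the dual objective is $\max v_\emptyseq$ (the coefficient $1$ multiplying $v_\emptyseq$ comes from the normalization $y\opp[\emptyseq]=1$). Strong LP duality — applicable because $\Y\opp$ is a nonempty bounded polytope, so the primal is feasible and bounded — shows the inner $\min$ equals this $\max$, for every fixed $\vec\xi\team$.

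Finally I would fold the outer $\argmax_{\vec\xi\team\in\Xi\team}$ back in: substituting the dual form of the inner value turns the saddle-point problem into the single maximization over $(\vec v, \vec\xi\team)$ subject to \circled{1}--\circled{4}, and the objective is linear because each $c[\sigma\opp]$ is linear in $\vec\xi\team$, so the right-hand sides of \circled{1}--\circled{2} are linear in the decision variables. By \cref{prop:tmecor as lp}'s preceding discussion (the identity $\xi\team[\sigma\tone(z),\sigma\ttwo(z)] = \rho\team[z]$ and the fact that $\Xi\team$ is exactly the set of correlation plans induced by some $\mu\team$), a feasible $\vec\xi\team$ attaining the optimum is precisely a TMECor, and conversely. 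The only genuinely delicate point is making sure the sign and incidence pattern of each $y\opp[\sigma\opp]$ across the two equality constraints it touches is transcribed correctly into the dual; perfect recall guarantees each sequence has a unique parent infoset, which is what makes the $v_I$-indexing on the left side of \circled{1} well defined. Everything else is standard LP duality, so I would keep that part terse.
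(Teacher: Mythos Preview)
Your proposal is correct and follows essentially the same approach as the paper: write the inner minimization over $\vec{y}\opp\in\Y\opp$ explicitly using the sequence-form constraints, introduce free dual variables $v_\emptyseq$ and $\{v_I\}_{I\in\I\opp}$, dualize, and then re-attach the outer $\argmax$ over $\vec{\xi}\team\in\Xi\team$. If anything, you are more explicit than the paper about the incidence bookkeeping and about invoking strong duality via boundedness of $\Y\opp$; the paper's proof simply writes out the primal $(P)$ and dual $(D)$ and states the conclusion.
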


\noindent As a direct consequence of \cref{prop:tmecor as lp}, a TMECor can be found in polynomial time whenever $\Xi\team$ can be represented as the intersection of a set of polynomially many linear constraints. In \cref{sec:srcp}, we recall when that is the case.

\vspace{-1mm}
\section{Semi-Randomized Correlation Plans and the Structure of $\Xi\team$}\label{sec:srcp}

Even though $\Xi\team$ is a convex polytope, the set of (potentially exponentially many) linear constraints that define it is not known in general. So, alternative characterizations of the set $\Xi\team$ are needed before the LP in \cref{prop:tmecor as lp} can be solved. In this section, we recall two known results about the structure of $\Xi\team$, and propose a new one (\cref{prop:convex hull}). We will use our result to arrive at two different approaches to tackle the LP of \cref{prop:tmecor as lp} in \cref{sec:small_support} and~\ref{sec:colgen}, respectively.

\vspace{-1mm}
\subsection{Containment in the von Stengel-Forges Polytope}
The first result about the structure of $\Xi\team$ has to do with a particular polytope that was introduced by \citet{Stengel08:Extensive}.
\begin{definition}\label{def:vsf}
    The \emph{von Stengel-Forges polytope} of the team, denoted $\cV\team$, is the polytope of all vectors $\vec{\xi} \in \mathbb{R}_{\ge 0}^{|\Sigma\tone\rele\Sigma\ttwo|}$ indexed over relevant sequence pairs that satisfy the following polynomially-sized set of linear constraints.\vspace{-1mm}
    \[
        \begin{array}{l}
        \circled{\normalfont 1}\quad\displaystyle \xi[\emptyseq,\emptyseq] = 1\\[1mm]
        \circled{\normalfont 2}\quad\displaystyle\sum_{\mathclap{a\tone \in A_{I\tone}}} \xi[(I\tone,a\tone),\sigma\ttwo] = \xi[\sigma(I\tone),\sigma\ttwo] \quad\forall I\tone \rele\sigma\ttwo\\[4mm]
        \circled{\normalfont 3}\quad\displaystyle\sum_{\mathclap{a\ttwo \in A_{I\ttwo}}} \xi[\sigma\tone, (I\ttwo,a\ttwo)] = \xi[\sigma\tone, \sigma(I\ttwo)] \quad\forall \sigma\tone \rele I\ttwo.
        \end{array}
    \]
\end{definition}
\vspace{-1mm}
\noindent These can be interpreted as ``probability mass conservation'' constraints. They are interlaced sequence-form constraints. 

The following result by \citet{Stengel08:Extensive} is immediate from the definition of $\xi\team$ in \eqref{eq:xi def}.
\begin{proposition}[\citet{Stengel08:Extensive}]\label{prop:xi subset vsf}
    The set of extensive-form correlation plans is a subset of the von Stengel-Forges polytope. Formally, $\Xi\team \subseteq \cV\team$.
\end{proposition}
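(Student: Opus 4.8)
The plan is to take an arbitrary correlated distribution of play $\mu\team \in \Delta(\Pi\tone \times \Pi\ttwo)$, let $\vec\xi\team$ be the vector it induces through \eqref{eq:xi def}, and verify one by one that $\vec\xi\team$ satisfies every defining (in)equality of $\cV\team$ from \cref{def:vsf}, together with nonnegativity. Nonnegativity of each entry is immediate, since each $\xi\team[(\sigma\tone,\sigma\ttwo)]$ is a finite sum of the nonnegative numbers $\mu\team[(\pi\tone,\pi\ttwo)]$. Constraint \circled{1} follows because $\Pi_i(\emptyseq) = \Pi_i$ by definition, so $\xi\team[\emptyseq,\emptyseq]$ is the sum of $\mu\team$ over the whole of $\Pi\tone \times \Pi\ttwo$, which equals $1$.

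The substance lies in the mass-conservation constraints \circled{2} and \circled{3}, and by the symmetric roles of the two team members it suffices to treat \circled{2}. Fix a relevant pair $I\tone \rele \sigma\ttwo$. I would first record the purely combinatorial identity
\[
    \Pi\tone(\sigma(I\tone)) \;=\; \Pi\tone(I\tone) \;=\; \bigsqcup_{a\tone \in A_{I\tone}} \Pi\tone\big((I\tone, a\tone)\big),
\]
a disjoint union. The first equality holds because, under perfect recall, no move of Player $\tonepl$ occurs strictly between the parent sequence $\sigma(I\tone)$ and the infoset $I\tone$; hence a reduced-normal-form plan prescribes all actions on the path to $I\tone$ exactly when it prescribes all actions on the path to $\sigma(I\tone)$ and plays the action of $\sigma(I\tone)$. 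The second equality holds because any reduced-normal-form plan reaching $I\tone$ prescribes exactly one action $a\tone \in A_{I\tone}$ there, which places it in exactly one of the sets $\Pi\tone((I\tone,a\tone))$. Summing \eqref{eq:xi def} over $a\tone \in A_{I\tone}$ and using this partition to coalesce the inner sums then gives
\[
    \sum_{a\tone \in A_{I\tone}} \xi\team[(I\tone,a\tone),\sigma\ttwo] \;=\; \sum_{\substack{\pi\tone \in \Pi\tone(\sigma(I\tone))\\ \pi\ttwo \in \Pi\ttwo(\sigma\ttwo)}} \mu\team[(\pi\tone,\pi\ttwo)] \;=\; \xi\team[\sigma(I\tone),\sigma\ttwo],
\]
which is precisely constraint \circled{2}. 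The argument for \circled{3} is identical with $\tonepl$ and $\ttwopl$ interchanged.

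The only points requiring a little care — and the closest thing to an obstacle in an otherwise routine verification — are bookkeeping ones: (i) one must check that every entry occurring in \circled{2}–\circled{3} is indeed indexed by a relevant pair, so the constraints are well-posed (e.g.\ $I\tone \rele \sigma\ttwo$ forces both $(I\tone,a\tone) \rele \sigma\ttwo$ for every $a\tone \in A_{I\tone}$ and $\sigma(I\tone) \rele \sigma\ttwo$, which follows from the connectedness definition of relevance together with perfect recall); and (ii) the identity $\Pi\tone(I\tone) = \Pi\tone(\sigma(I\tone))$, which is where perfect recall is genuinely used and which should be stated explicitly. Everything else is a direct expansion of the definition \eqref{eq:xi def}, so no real difficulty is expected.
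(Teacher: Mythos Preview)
Your proposal is correct and is exactly the approach the paper has in mind: the paper does not spell out a proof at all, merely stating that the inclusion ``is immediate from the definition of $\xi\team$ in \eqref{eq:xi def}''; you have simply written out that immediate verification in full, including the disjoint-union identity $\Pi\tone(\sigma(I\tone)) = \bigsqcup_{a\tone} \Pi\tone((I\tone,a\tone))$ that makes constraints \circled{2}--\circled{3} transparent.
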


\subsection{Triangle-Freeness and Polynomial-Time Computation of TMECor}\vspace{-1mm}
\cref{prop:xi subset vsf}
shows that $\Xi\team$ is a subset of the von Stengel-Forges polytope. There are games where the reverse inclusion does not hold. \citet{Farina20:Polynomial} gave a sufficient condition---called \emph{triangle-freeness}---for the reverse inclusion to hold. We state the condition for our setting.
\begin{definition}[\citet{Farina20:Polynomial}]
    The interaction of the team members \tonepl{} and \ttwopl{} is \emph{triangle-free} if, for any choice of distinct information sets $I_1, I_2 \in \I\tone$ with $\sigma\tone(I_1) = \sigma\tone(I_2)$ and any choice of distinct information sets $J_1, J_2 \in \I\ttwo$ with $\sigma\ttwo(J_1) = \sigma\ttwo(J_2)$, it is never the case that $(I_1 \connected J_1) \land (I_2 \connected J_2) \land (I_1 \connected J_2)$.
\end{definition}

\noindent\citet{Farina20:Polynomial} show that when the information structure of correlating players (in our case, the team members) is triangle-free, then $\Xi\team = \cV\team$. So, when the interaction of the team is triangle-free, a TMECor can be found in polynomial time by substituting constraint \circled{4} in the LP in \cref{prop:tmecor as lp} with the von Stengel-Forges constraints of \cref{def:vsf}. As far as we are aware, this positive complexity result has not been noted before in the literature. We show in \cref{table:rele vs prod}(c) that Goofspiel is triangle free (and that none of the other common benchmark games that we consider are).

\subsection{Semi-Randomized Correlation Plans} 

We now give a third result about the structure of $\Xi\team$, which will enable us to replace Constraint \circled{4} of \cref{prop:tmecor as lp} with something more practical. First, we introduce \emph{semi-randomized correlation plans}, which are subsets of the von Stengel-Forges polytope of the team. They represent strategy profiles in which one of the players plays a deterministic strategy, while the other player in the team independently plays a randomized strategy. Formally, we define the set of semi-randomized correlation plans for \tonepl{} and \ttwopl{} as
    \begin{align*}
        \Xi\tone^* &= \{\vec{\xi} \in \cV\team : \xi[\emptyseq, \sigma\ttwo] \in \{0,1\} \quad \forall\ \sigma\ttwo \in \Sigma\ttwo\}, \\
        \Xi\ttwo^* &= \{\vec{\xi} \in \cV\team : \xi[\sigma\tone, \emptyseq] \in \{0,1\} \quad \forall\ \sigma\tone \in \Sigma\tone\},
    \end{align*}
respectively. Crucially, a point $\vec{\xi} \in \Xi_i^*$ for $i\in \{\tonepl,\ttwopl\}$ can be expressed using real and binary variables, in addition to the linear constraints the define $\cV$ (\cref{def:vsf}).

With that, we can show the following structural result for the polytope of extensive-form correlation plans $\Xi\team$.

\begin{restatable}{proposition}{propconvexhull}\label{prop:convex hull}
    In every game, $\Xi\team$ is the convex hull of the set $\Xi\tone^*$, or equivalently of the set $\Xi\ttwo^*$. Formally, $\Xi\team = \co \Xi\tone^*  = \co \Xi\ttwo^* = \co (\Xi\tone^* \cup \Xi\ttwo^*)$.
\end{restatable}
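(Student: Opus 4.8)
The plan is to establish the single chain of inclusions
\[
\Xi\team \;\subseteq\; \co(\Xi\tone^* \cap \Xi\ttwo^*) \;\subseteq\; \co\Xi\tone^* \;\subseteq\; \co(\Xi\tone^* \cup \Xi\ttwo^*) \;\subseteq\; \Xi\team ,
\]
together with the symmetric version in which $\Xi\ttwo^*$ replaces $\Xi\tone^*$ in the two middle sets. Since the first and last objects coincide, everything in the chain equals $\Xi\team$, which is exactly the claim. The two middle inclusions are purely set-theoretic, the first inclusion is easy, and the whole content sits in the last inclusion; because $\Xi\team$ is convex, for that it suffices to prove $\Xi\tone^* \subseteq \Xi\team$ and, symmetrically, $\Xi\ttwo^* \subseteq \Xi\team$.

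For the first inclusion I would use that the map $\mu\team \mapsto \vec\xi\team$ of \eqref{eq:xi def} is linear, and that the correlation plan $\vec\xi_{\pi\tone,\pi\ttwo}$ of a \emph{deterministic} profile $(\pi\tone,\pi\ttwo)\in\Pi\tone\times\Pi\ttwo$ has its two ``marginals'' $\xi[\emptyseq,\,\cdot\,]$ and $\xi[\,\cdot\,,\emptyseq]$ equal to the $\{0,1\}$-valued sequence-form indicators of $\pi\ttwo$ and $\pi\tone$, respectively; since $\vec\xi_{\pi\tone,\pi\ttwo}\in\Xi\team\subseteq\cV\team$ by \cref{prop:xi subset vsf}, this places it in $\Xi\tone^*\cap\Xi\ttwo^*$. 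As every $\mu\team\in\Delta(\Pi\tone\times\Pi\ttwo)$ is a convex combination of deterministic profiles, linearity gives $\vec\xi\team\in\co\{\vec\xi_{\pi\tone,\pi\ttwo}\}\subseteq\co(\Xi\tone^*\cap\Xi\ttwo^*)$, and $\co(\Xi\tone^*\cap\Xi\ttwo^*)\subseteq\co\Xi\tone^*\subseteq\co(\Xi\tone^*\cup\Xi\ttwo^*)$ (and likewise through $\co\Xi\ttwo^*$), which handles the squeeze once the last inclusion is proved.

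To prove $\Xi\tone^*\subseteq\Xi\team$, take $\vec\xi\in\Xi\tone^*$. Constraints \circled{1} and \circled{3} of \cref{def:vsf}, the latter specialized to $\sigma\tone=\emptyseq$, are exactly the defining constraints of $\seqSet\ttwo$, so together with $\vec\xi\ge\vec 0$ and the hypothesis $\xi[\emptyseq,\sigma\ttwo]\in\{0,1\}$ the vector $(\xi[\emptyseq,\sigma\ttwo])_{\sigma\ttwo}$ is a $\{0,1\}$-vertex of $\seqSet\ttwo$, hence the indicator of a unique reduced-normal-form plan $\pi\ttwo\in\Pi\ttwo$; symmetrically, \circled{1} and \circled{2} specialized to $\sigma\ttwo=\emptyseq$ show $\vec y\defeq(\xi[\sigma\tone,\emptyseq])_{\sigma\tone}\in\seqSet\tone$, so $\vec y$ is induced by some $\mu\tone\in\Delta(\Pi\tone)$. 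The crux is then the identity $\xi[\sigma\tone,\sigma\ttwo]=y[\sigma\tone]\cdot\one[\pi\ttwo\in\Pi\ttwo(\sigma\ttwo)]$ for every relevant pair $(\sigma\tone,\sigma\ttwo)$, which I would prove in two steps. First, iterating constraint \circled{2} up \tonepl{}'s tree and using $\vec\xi\ge\vec 0$ gives $\xi[\sigma\tone,\sigma\ttwo]\le\xi[\emptyseq,\sigma\ttwo]$ for every relevant pair; in particular $\xi[\sigma\tone,\sigma\ttwo]=0$ whenever $\pi\ttwo\notin\Pi\ttwo(\sigma\ttwo)$, since then $\xi[\emptyseq,\sigma\ttwo]=0$. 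Second, for $\sigma\ttwo$ with $\pi\ttwo\in\Pi\ttwo(\sigma\ttwo)$ I would induct on the depth of $\sigma\ttwo$ in \ttwopl{}'s tree: in constraint \circled{3} at $(\sigma\tone,I\ttwo)$ every summand whose action differs from the one $\pi\ttwo$ plays at $I\ttwo$ vanishes by the first step, so $\xi[\sigma\tone,\sigma\ttwo]=\xi[\sigma\tone,\sigma(I\ttwo)]$, and $\sigma(I\ttwo)$ is a shallower used sequence, so the inductive hypothesis gives $\xi[\sigma\tone,\sigma(I\ttwo)]=y[\sigma\tone]$. Plugging the resulting identity into \eqref{eq:xi def} shows that $\vec\xi$ is precisely the correlation plan of the product distribution $\mu\tone\otimes\delta_{\pi\ttwo}$, hence $\vec\xi\in\Xi\team$; the argument for $\Xi\ttwo^*\subseteq\Xi\team$ is identical with the two team members swapped. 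Convexity of $\Xi\team$ then gives $\co(\Xi\tone^*\cup\Xi\ttwo^*)\subseteq\Xi\team$, closing the chain.

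The step I expect to be the main obstacle is the relevance bookkeeping inside the two propagation arguments: one must check that every entry of $\vec\xi$ invoked while iterating \circled{2} and \circled{3} is indexed by a \emph{relevant} pair---so that it is a genuine coordinate of $\vec\xi$ and the corresponding von Stengel--Forges constraint is actually present---which reduces to showing that replacing a sequence by its parent sequence preserves the relevance relation, and this is exactly where perfect recall of \tonepl{} and \ttwopl{} enters. The other ingredients (linearity of the realization map, convexity of $\Xi\team$, and the identification of $\{0,1\}$-points of a sequence-form polytope with reduced-normal-form plans) are routine.
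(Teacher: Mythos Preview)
Your proof is correct and follows essentially the same approach as the paper: the paper also shows $\Xi\team \subseteq \co\Xi\tone^*$ by noting that correlation plans of deterministic profiles are $\{0,1\}$-valued (hence in $\Xi\tone^*$) and extending by linearity, and shows $\Xi\tone^* \subseteq \Xi\team$ by proving---via the same depth induction and vanishing argument you outline---that every semi-randomized plan factors as $\xi[\sigma\tone,\sigma\ttwo]=\xi[\sigma\tone,\emptyseq]\cdot\xi[\emptyseq,\sigma\ttwo]$ (the paper packages this through an intermediate notion of \emph{product correlation plan}) and hence corresponds to a product distribution $\mu\tone\otimes\mu\ttwo$. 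The paper cites a lemma of \citet{Farina20:Polynomial} for the vanishing step $\xi[\emptyseq,\sigma\ttwo]=0\Rightarrow\xi[\sigma\tone,\sigma\ttwo]=0$, which is exactly the relevance-preservation argument you correctly flag as the main technical obstacle.
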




\begin{table*}
    \sisetup{detect-all=true,scientific-notation=fixed,fixed-exponent=0,round-mode=places,
round-precision=4}
\setlength{\tabcolsep}{4.5pt}\small\centering
\begin{tikzpicture}
\node[anchor=south west] at (0, 0) {
\begin{tabular}{cc||rrr|r||rrr|r||rrr|r}
    \toprule
    \multicolumn{2}{c||}{\multirow{2}{*}{\bf Game}} & \multicolumn{4}{c||}{\bf Opponent player \opppl{} = 1} & \multicolumn{4}{c||}{\bf Opponent player \opppl{} = 2} & \multicolumn{4}{c}{\bf Opponent player \opppl{} = 3}\\
    \multicolumn{2}{c||}{}&$n=1$&$n=2$&$n=3$&$n=\infty$&$n=1$&$n=2$&$n=3$&$n=\infty$&$n=1$&$n=2$&$n=3$&$n=\infty$\\
    \midrule
    \multirow{3}{*}{\begin{minipage}{.6cm}\centering Kuhn\\poker\end{minipage}} & \bf [A]
           & \num{0} & \dnr & \dnr & \num{0}    
           & \num{0} & \dnr & \dnr & \num{0}    
           & \num{0} & \dnr & \dnr & \num{0}    
    \\
    &\bf [B]
           & \num{2.083333333333e-02} & \num{3.787878787879e-02} & \dnr & \num{3.787878787879e-02}   
           & \num{1.811594202899e-03} & \num{2.457264957265e-02} & \num{2.651515151515e-02} & \num{0.026515} 
           & \num{-4.166666666667e-02} & \dnr & \dnr & \num{-0.041667}   
    \\
    &\bf [C]
           & \num{4.696969696970e-02} & \num{6.554988612408e-02} & \num{6.632787698421e-02} & \num{0.066403}  
           & \num{1.275966239995e-02} & \num{3.671224860636e-02} & \num{3.760965149463e-02} & \num{ 0.037952} 
           & \num{-2.272727272727e-02} & \num{-1.532305776584e-02} & \num{-1.406490288779e-02} & \num{ -0.014013} 
    \\
    \midrule
    \multirow{2}{*}{\!\!Goofspiel\!\!}&\bf [D]
           & \num{0.238888893519e+00} & \num{0.252421652422e+00} & \dnr & \num{0.252422}  
           & \num{0.23888888} & \num{0.25242165} & \dnr & \num{0.252422}  
           & \num{0.23888888} & \num{0.252421652} & \dnr & \num{0.252422}
    \\
    &\bf [E]
           & \num{0.23888888} & \num{0.25343915} & \dnr & \num{0.253439}  
           & \num{0.23888888} & \num{0.253439153439} & \dnr & \num{0.25343}     
           & \num{0.23888888} & \num{0.253439153439} & \dnr & \num{0.25343}     
    \\
    \midrule
    \multirow{2}{*}{\begin{minipage}{.6cm}\centering Liar's\\dice\end{minipage}}&\bf [F]
           & \num{0} & \dnr & \dnr & \num{0}   
           & \num{2.098765432099e-01} & \num{2.553606237817e-01} & \num{2.561728395062e-01} & \num{0.256173}    
           & \num{2.716049382716e-01} & \num{2.839506172840e-01} & \dnr & \num{0.283951}
    \\
    &\bf [G]
           & \num{6.25e-02} & \dnr & \dnr & \num{0.0625}  
           & \num{2.500000000000e-01} & \num{0.2656250} & \num{0.2656250} & \unk 
           & \num{0.26563} & \unk & \unk & \unk 
    \\
    \midrule
    \multirow{3}{*}{\begin{minipage}{.6cm}\centering Leduc\\poker\end{minipage}}&\bf [H]
           & \num{1.453111360519e-01} & \num{2.245958829529e-01} & \num{0.2465675} 
             & \num{0.276540}   
           & \num{2.107491769076e-01} & \num{0.28627} & \num{3.142783655585e-01} & \num{0.345024} 
           & \num{1.840402599314e-01} & \num{2.448037777400e-01} & \num{0.28150} & \num{0.292622} 
    \\
    &\bf [I]
           & \unk & \unk & \unk & \num{0.1421806}  
           & \unk & \unk & \unk & \num{0.1419778}  
           & \unk & \unk & \unk & \num{0.0850132}  
    \\
    &\bf [J]
           & \num{2.448979591837e-01} & \num{7.037037037037e-01} & \num{7.974590989826e-01} & \num{0.835897}  
           & \num{2.101359703337e-01} & \num{9.222222222222e-01} & \num{0.96946} & \num{0.970940}  
           & \num{2.448979591837e-01} & \num{7.037037037038e-01} & \num{0.7974591} & \num{0.8358974}  
    \\
    \bottomrule
\end{tabular}};
\draw[line width=2.2mm,white] (2.26, 0.08) -- (2.26, 5.30);
\draw[line width=2.2mm,white] (7.25, 0.08) -- (7.25, 5.30);
\draw[line width=2.2mm,white] (12.18, 0.08) -- (12.18, 5.30);
\end{tikzpicture}\vspace{-3mm}
\caption{Expected utility of the team for varying support sizes ($n$). All values for $n \in \{1,2,3\}$ were computed using the MIP of \cref{sec:small_support}, while the values corresponding to $n=\infty$ were computed using our column generation approach (\cref{sec:colgen}). `$\bigstar$`: A provably optimal utility has already been obtained with a lower value of the support size $n$. `---`: We were unable to compute the exact value, because the corresponding algorithm hit the time limit.}
\label{tab:smallsupp}\vspace{-4mm}
\end{table*}

\section{Computing TMECor with a Small Support of Semi-Randomized 
Plans of Fixed Size}\label{sec:small_support}

From \cref{prop:convex hull}, it is known that $\Xi\team$ is the convex hull of $\Xi\tone^*$ and $\Xi\ttwo^*$.  Furthermore, the polytopes $\Xi\tone^\ast$ and $\Xi\ttwo^\ast$ can be described via a number of linear constraints that is quadratic in the game size and a number of integer variables that is linear in the game size. 
So, we can replace Constraint \circled{4} in \cref{prop:tmecor as lp} with the constraint that $\vxi$ be a convex combination of elements from $\Xi\tone^*$ and $\Xi\ttwo^*$. We introduce variables $\vxi^{(1)},\ldots,\vxi^{(n)} \in \Xi\tone^* \cup \Xi\ttwo^*$ and the corresponding convex combination coefficients $\lambda^{(1)},\ldots,\lambda^{(n)}$, and replace Constraint \circled{4} with the linear constraint $\vec{\xi}\team = \sum_{i=1}^n \lambda^{(i)}\vec{\xi}\team^{(i)}$. Here, $n$ is a parameter with which we can cap the number of semi-randomized correlation plans that can be included in the strategy.  This gives the following mixed integer LP. 
    \begin{equation*}
        \mleft\{\hspace{-1.25mm}\begin{array}{l}
            \displaystyle
            \argmax_{\vxi^{(1)},\ldots,\vxi^{(n)},\lambda^{(1)},\ldots,\lambda^{(n)}}~~~ v_\emptyseq,\quad\text{\normalfont subject to:}\\[7mm]
            ~\text{constraints }\circled{\normalfont 1}~ \circled{\normalfont 2}~ \circled{\normalfont 3}\text{ as in \cref{prop:tmecor as lp}}\\[1mm]
            ~\circled{\normalfont 4}~\vec{\xi}\team = \sum_{i=1}^n \lambda^{(i)}\vec{\xi}\team^{(i)}\\[1mm]
            ~\circled{\normalfont 5}~\vec{\xi}\team^{(1)}\in\Xi\tone^*, \vec{\xi}\team^{(2)}\in\Xi\ttwo^*, \vec{\xi}\team^{(3)}\in\Xi\tone^*, \vec{\xi}\team^{(4)}\in\Xi\ttwo^*, \dots^\ddagger\\[1mm]
            ~\circled{\normalfont 6}~\sum_{i=1}^n \lambda^{(i)} = 1, \ \lambda^{(i)} \ge 0\quad \forall i\in\{1,\dots,n\}.
        \end{array}\mright.
    \end{equation*}
    \def\thefootnote{$\ddagger$}
    \footnotetext{In Constraint \circled{5} we alternate the set of semi-randomized correlation plans (i.e., we alternate which player's turn it is to play a deterministic strategy). Empirically, this increases the diversity of the strategies of $\Xi\team$ that can be represented with small values of $n$ and leads to higher utilities for the team.}
    \renewcommand*{\thefootnote}{\arabic{footnote}}

The larger $n$ is, the higher the solution value obtained, but the slower the program. We can make this into an anytime algorithm by solving the integer program for increasing values of $n$. 
By Caratheodory's theorem, this program already yields an optimal solution to the LP in \cref{prop:tmecor as lp} when $n \ge |\Sigma_1\rele \Sigma_2| + 1$.
As we show in detail in \cref{sec:exp}, in practice we found that near-optimal coordination can be achieved through strategies with a significantly smaller value of $n$. 
Hence, oftentimes the team does not need a large number of complex profiles of randomized strategies to play optimally: a handful of carefully selected simple  strategies often result in optimal coordination.

\section{A Fast Column Generation Approach}\label{sec:colgen}
In this section, we show a different approach to solving the LP in \cref{prop:tmecor as lp}---using column generation~\citep{Ford58:Suggested}.
First, we proceed with a \emph{seeding} phase. We pick a set $S$ containing one or more points $\vxi^{(1)}, \vxi^{(2)}, \dots, \vxi^{(m)}$ that are known to belong to $\Xi\team$. 
Then, the main loop starts. First, for  $i\in\{1,\dots, |S|\}$, let
\[
\beta^{(i)}(\sigma\opp)\defeq \sum_{\mathclap{\substack{z \in Z\\\sigma\opp(z) = \sigma\opp}}} \hat{u}\team(z)\xi\team^{(i)}[\sigma\tone(z),\sigma\ttwo(z)]\quad\forall\ \sigma\opp\in\Sigma\opp.
\] 
Then we solve the LP of \cref{prop:tmecor as lp} where Constraint \circled{4} has been substituted with $\vxi \in \co S$:
\begin{equation*}
    (\ast):\mleft\{\hspace{-1.25mm}\begin{array}{l}
        \displaystyle
        \argmax_{\lambda^{(1)},\dots,\,\lambda^{(|S|)}}~~~ v_\emptyseq,\quad\text{\normalfont subject to:}\\
        ~\circled{\normalfont 1}~ \displaystyle v_I - ~\sum_{\mathclap{\substack{I' \in \mathcal{I}\opp\\\sigma\opp(I') = \sigma\opp}}}~ v_{I'} - ~\sum_{i=1}^{|S|}\beta^{(i)}(\sigma\opp)\,\lambda^{(i)} \le 0\\[-5mm]
        \hspace{4.8cm}\forall\, \sigma\opp \!\in\! \Sigma\opp\!\!\setminus\! \{\emptyseq\}\\[1mm]
        ~\circled{\normalfont 2}~\displaystyle v_\emptyseq - ~\sum_{\mathclap{\substack{I' \in \mathcal{I}\opp\\\sigma\opp(I') = \emptyseq}}}~ v_{I'} - ~\sum_{i=1}^{|S|}\beta^{(i)}(\emptyseq)\,\lambda^{(i)} \le 0\\[9mm]
        \circled{\normalfont 3}~\sum_{i=1}^{|S|}\lambda^{(i)}=1\\[1mm]
        \circled{\normalfont 4}~\lambda^{(i)}\ge 0 \quad\forall\ i \in \{1,\dots,|S|\}\\[1mm]
        \circled{\normalfont 5}~v_\emptyseq\text{ free}, v_{I} \text{ free } \quad \forall\ I \in \mathcal{I}\opp.
    \end{array}\mright.
\end{equation*}
This is called the \textit{master LP}.\footnote{In $(*)$ the convex combination is among \emph{given} correlation plans, while in the MIP of \cref{sec:small_support}, the elements to combine are themselves variables.}

Given the solution to the master LP, a \emph{pricing problem} is created. The goal of the pricing problem is to generate a new element $\vxi^{|S|+1}$ to be added to $S$ so as to increase the team utility in the next iteration, that is, the next solve of the master LP that then has an additional variable. This main loop of solving the larger and larger master LP keeps repeating until termination (discussed later).

\subsection{The Pricing Problem}

The pricing problem consist of finding a correlation plan $\hat{\vec{\xi}}\team\in\Xi\team$ which, if included in the convex combination computed by $(*)$, would lead to the maximum gradient of the objective (that is, the maximum \emph{reduced cost}). By exploiting the theory of linear programming duality, such a correlation plan can be computed starting from the solution of the dual of $(*)$. In particular, let $\vdual$ be the $|\Sigma\opp|$-dimensional vector of dual variables corresponding to Constraints \circled{\normalfont 1} and \circled{\normalfont 2} of $(*)$, and $\dual' \in\mathbb{R}$ be the dual variable corresponding to Constraint \circled{\normalfont 3}. Then, the reduced cost of any candidate $\hat{\vec{\xi}}\team$ is
\begin{align*}
    &c(\hat{\vec{\xi}}\team)\defeq -\dual'+\sum_{\substack{z\in Z}} \hat{u}\team(z)\hat{\xi}\team[\sigma\tone(z),\sigma\ttwo(z)] \dual[\sigma\opp(z)].
\end{align*}

Now comes our crucial observation. Since $c(\hat{\vec{\xi}}\team)$ is a linear function, and since from \cref{prop:convex hull} we know that $\Xi\team = \co \Xi^*\tone$, by convexity
\[
    \max_{\hat{\vec{\xi}}\team \in \Xi\team} c(\hat{\vec{\xi}}\team) = \max_{\hat{\vec{\xi}}\team \in \Xi\tone^*} c(\hat{\vec{\xi}}\team).
\]
We want to solve the LP on the left hand side, but---as discussed in \cref{sec:srcp}---the constraints defining $\Xi\team$ are not known. The above equality enables us to solve the problem because the right hand side is a well-defined mixed integer LP (MIP). We can use a commercial solver such as Gurobi to solve it. 
When the objective value of the pricing problem is non-positive, there is no variable that can be added to the master LP which would increase its value. Thus, the optimal solution to the master LP is guaranteed to be optimal for the LP in \cref{prop:tmecor as lp} and the main loop terminates.

\vspace{-1mm}
\subsection{Implementation Details}\label{sec:implementation_colgen}
We further speed up the solution of the pricing problem in our implementation by the following techniques.

\noindent\textbf{Seeding phase.} 
To avoid having to go through many iterations of the main loop, each of which requires solving the pricing problem, we want to seed the master LP up front with a set of good candidate variables. While any seeding maintains optimality of the overall algorithm, seeding it with variables that are likely to be part of the optimal solution increases speed the most. We initialize the set of correlation plans $S$ by running $m$ iterations of a self-play no-external-regret algorithm. Specifically, we let each player run CFR+~\cite{Tammelin15:Solving,Bowling15:Heads} and, at each iteration of that algorithm, we sample a pair of pure normal-form plans for the two team members according to the current strategies of the two players. At each iteration of that no-regret method, we set the utility of each team member to $u\tone + u\ttwo$. Finally, for each pair $(\pi\tone, \pi\ttwo)\in\Pi\tone\times\Pi\ttwo$ of normal-form plans generated by that no-regret algorithm, we compute and add to $S$ the correlation plan corresponding to the distribution $\mu$ that assigns probability $1$ to $(\pi\tone,\pi\ttwo)$ using \cref{eq:xi def}. While self-play no-regret methods guarantee convergence to Nash equilibrium in two-player zero-sum game, no guarantee is available in our setting. However, we empirically find that this seeding strategy leads to a strong initial set of correlation plans. 

\noindent\textbf{Linear relaxation.} 
Before solving the MIP formulation of the pricing problem, we first try to solve its linear relaxation
$ \argmax_{\hat{\vec{\xi}}\team\in\cV\team}c(\hat{\vec{\xi}}\team)$.
We found that in many cases it outputs semi-randomized correlation plans, thus avoiding the overhead of having to solve a MIP. 

\noindent\textbf{Solution pools.} 
Modern commercial MIP solvers such as Gurobi keep track of additional suboptimal feasible solutions (in addition to the optimal one) that were found during the process of solving a MIP. Since accessing those additional solutions is essentially free computationally, we add to $S$ all the solutions (even suboptimal ones) that were produced in the process of solving the MIP. This can be viewed as a form of dynamic seeding and does not affect the optimality of the overall algorithm.

\noindent\textbf{Termination.} 
Because fast integer and LP solvers work with real-valued variables, near the end of the column-generation loop the new variables that are generated in the pricing problem have reduced costs that are very close to zero. It is not clear whether they are actually positive or zero. Therefore, we set the numeric tolerance so that we stop the column-generation loop if the value of the pricing problem solution is less than $10^{-6}$.

\noindent\textbf{Dual values.} To obtain the dual values used in the pricing problem, we do not need to formulate and solve a dual LP as modern LP solvers already keep track of dual values.

\section{Experimental Evaluation}\label{sec:exp}\begin{table*}
    \sisetup{detect-all=true,scientific-notation=fixed,fixed-exponent=0,round-mode=places,
    round-precision=3}
    \setlength{\tabcolsep}{4.5pt}\small\centering
\begin{tikzpicture}
\node[anchor=south west] at (0, 0) {
    \begin{tabular}{c||rr|rrr|r||rr||rrr|r}
        \toprule
            \multirow{2}{*}{\!\bf Game}  & \multicolumn{2}{c|}{\bf Ours} & \multicolumn{3}{c|}{\bf Fictitious Team Play (FTP)} & \bf \multirow{2}{*}{CG-18} & \multicolumn{2}{c||}{\bf Pricers} & \multicolumn{3}{c|}{\bf Team utility after seeding} & \bf TMECor\\
            &Seeded & \!Not seed.& $\epsilon=50\%$ & $\epsilon=10\%$ & $\epsilon=1\%$ & & Relax.\! & MIP & $m = 1$ &  \num[scientific-notation=false]{1000} & \num[scientific-notation=false]{10000} & \multicolumn{1}{c}{\bf value}\\
        \midrule
            {\bf [A]} & \n{0.001} & \n{0.00100016} & \n{2}$\mathrlap{^\dagger}$ & \n{10}$\mathrlap{^\dagger}$&\n{68}$^\dagger$& \n{0.174723} & 1 & 0 & \num{-0.5000000} & \num{0} & \num{0} &\num{0} \\
            {\bf [B]} & \n{0.001} & \n{0.0339999} & \n{232} & \n{2271} & \tl & \n{26.806077} & 2 & 0 & \num{-0.3645833} & \num{-0.0210305} & \num{-0.0198071} & \num{-0.0417}\\
            {\bf [C]} & \n{17.19899988174}& \n{18.607000} &\n{16957}&\tl&\tl&\tl & 2 & 25 & \num{-0.1545455} & \num{-0.0198071} & \num{-0.0198071} & \num{-0.0140}\\
        \midrule
            {\bf [D]} &\n{0.2669999599}& \n{0.68199992} & \n{50} & \n{561} & \tl & \n{189.39} & 14 & 0 & \num{-0.4359} &  \num{0.252263} & \num{0.252263} & \num{0.2524}\\  
            {\bf [E]} & \n{1.34200000}&\n{1.77499985} & \n{291} &\n{7345} & \tl & \n{1778.109693} & 48 & 0 & \num{-0.8300000} &  \num{0.2481481} & \num{0.250150} & \num{0.2534}\\ 
        \midrule
            {\bf [F]} & \n{101.54700}& \n{682.638000} & \tl &\tl&\tl&\tl & 20 & 7 & \num{-0.4814815} & \num{0.252263} & \num{0.252263} & \num{0.2840} \\  
            {\bf [G]} & \tl & \tl & \tl & \tl & \tl & \tl & \na & \na & \num{-0.6875000} & \num{0.2768617} & \textcolor{gray}{oom} & \na \\ 
        \midrule
            {\bf [H]} & \n{320.518000125885} & \n{353.95799994468}& \tl & \tl & \tl & \tl & 23 & 204 & \num{-2.3544444} & \num{0.0869079} & \num{0.1252301} & \num{0.2926} \\  
            {\bf [I]} & \n{5416.331000089645}& \n{6270.379999876022} & \tl & \tl & \tl & \tl & 5 & 638 & \num{-1.8265306}  & \num{0.0129802} & \num{0.0359574} & \num{0.0850}\\ 
            {\bf [J]}  & \n{668.1010000705719}& \n{889.4049999713898} & \tl & \tl & \tl & \tl & 1232 & 48 & \num{-3.3333333} & \num{0.6455854} & \num{0.6681084} & \num{0.8359}\\ 
        \bottomrule
    \end{tabular}};
    \draw[line width=2.2mm,white] (1.26, 0.08) -- (1.26, 5.34);
    \draw[line width=2.2mm,white] (9.95, 0.08) -- (9.95, 5.34);
    \draw[line width=2.2mm,white] (12.0, 0.08) -- (12.0, 5.34);
    \node[anchor=north] at (5.8,0.1) {\textbf{(a)} --- Comparison of run times};
    \node[anchor=north] at (11.0,0.1) {\textbf{(b)}};
    \node[anchor=north] at (14.5,0.1) {\textbf{(c)}};
\end{tikzpicture}\vspace{-2mm}
\caption{\textbf{(a)} Runtime comparison between our algorithm, FTP, and CG-18. The seeded version of our algorithm runs $m=1000$ iterations of CFR+ (\cref{sec:implementation_colgen}), while the non seeded version runs $m=1$. `{\scriptsize $\dagger$}': since the TMECor value for the game is exactly zero, we measure how long it took the algorithm to find a distribution with expected value at least $-\epsilon/10$ for the team.
\textbf{(b)} Number of times the pricing problem for our column-generation algorithm was solved to optimality by the linear relaxation (`Relax') and by the MIP solver (`MIP') when using our column-generation algorithm.
\textbf{(c)} Quality of the initial strategy of the team obtained for  varying sizes of $S$ compared to the expected utility of the team at the TMECor. `{\textcolor{gray}{oom}}': out of memory. }\vspace{-2mm}
\label{tab:colgen}
\end{table*}

We computationally evaluate the algorithms proposed in \cref{sec:small_support} and \cref{sec:colgen}. We test on the common parametric games shown in \cref{table:rele vs prod}. Appendix~\ref{sec:exp_appendix} provides additional detail about the games.
We ran the experiments on a machine with a 16-core 2.40GHz CPU and 32GB of RAM. We used Gurobi 9.0.3 to solve LPs and MIPs.

{\bf Small-Supported TMECor in Practice. }
\cref{tab:smallsupp} describes the maximum expected utility that the team can obtain by limiting the support of its distribution to $n\in\{1,2,3\}$ semi-randomized correlation plans. Columns denoted by $n=\infty$ show the optimal expected utility of the team at the TMECor (without any limit on the support size). We ran experiments with the opponent as the first (\opppl\,=\,1), second (\opppl\,=\,2), and third player (\opppl\,=\,3) of each game. 
In all the games, distributions with as few as two or three semi-randomized coordination plans gave the team near-optimal expected utility. Moreover, in several games, one or two carefully selected semi-randomized coordination plans are enough to reach an optimal solution. 

\noindent\noindent{\bf Column-Generation in Practice. }
We evaluate our column-generation algorithm against the two prior state-of-the art algorithms for computing a TMECor: the column-generation technique by~\citet{Celli18:Computational} (henceforth CG-18), and the fictitious-team-play algorithm by~\citet{Farina18:Ex} (denoted FTP). Like our algorithm, CG-18 uses column generation approach which lets \opppl{} play sequence-form strategies, while the team's strategy is directly represented as a distribution over joint normal-form plans. On the other hand, FTP is based on the bilinear saddle-point formulation of the problem and is essentially a variation of \emph{fictitious play}~\cite{Brown51:Iterative}. FTP operates on the bilinear formulation of TMECor \eqref{eq:tmecor_basic}: the team and the opponent are treated as two entities that converge to equilibrium in self-play. FTP only guarantees convergence in the limit to an approximate TMECor, while our algorithm certifies optimality. So, the run-time comparison between our algorithm to FTP must be done with care, as the latter never stops, whereas our algorithm and CG-18 terminate after a finite number of iterations with an \emph{exact} optimal strategy. We report the run time of FTP reaching solution quality that is $\epsilon = 50\%$, $10\%$, and $1\%$ off the optimal value (determined by the other two algorithms).
%
We set a time limit of 6 hours and a cap of at most four threads for each algorithm.
Table~\ref{tab:colgen} shows the results with the opponent playing as the third player. By \cref{tab:smallsupp}, this is almost always the hardest setting. The results for the other two settings are in~\cref{sec:additional_results}.

Our column-generation algorithm dramatically outperforms FTP and CG-18. There are settings, such as Liar's dice instance [F], where we our algorithm needs just a few seconds to compute an optimal TMECor, while previous algorithms exceed 6 hours. The last column of~\cref{tab:colgen}(c) shows the final team utility. Even when the opponent is playing as the third player, the team is able to reach positive expected utility. 
Finally, we identify Liar's dice instance [G] as the current boundary of problem that just cannot be handled with current TMECor technology.

Using the linear relaxation of the pricing problem (``implementation details'' in Section~\ref{sec:implementation_colgen}) often obviated the need to run the slower MIP pricing (see Table~\ref{tab:colgen}(b)).  In all Goofspiel instances (games [D] and [E]) and in small Kuhn poker instances, the MIP pricing is never invoked.

Regret-based seeding further ameliorates the performance of the algorithm. In the Liar's dice instance [F], it reduced run time by roughly a factor of ten. The value of the initial master solution (that is, before the first pricing) increases significantly 
with the number of iterations of the no-regret algorithm used for seeding.

\section{Conclusions}
We studied the problem of finding an optimal strategy for a team with two members facing an opponent in an imperfect-information, zero-sum, extensive-form game. We focused on the scenario in which team members are not allowed to communicate during play but can coordinate before the game.  
First, we provided modeling results by drawing a connection to previous results on extensive-form correlation. Then, we developed an algorithm that computes an optimal joint distribution by only using profiles where only one of the team members gets to randomize in each profile. 
We can cap the number of such profiles we allow in the solution. This begets an anytime algorithm by increasing the cap. Moreover, we showed that often a handful of well-chosen such profiles suffices to reach optimal utility for the team. 
Inspired by this observation and leveraging theoretical concepts that we introduced, we developed an efficient column-generation algorithm for finding an optimal strategy for the team. We tested our algorithm on a suite of standard games, showing that it is three order of magnitudes faster than the state of the art and also solves many games that were previously intractable.

 
 
 
 
 
 
 

\section*{Broader Impact}

Enabling the computation of strong, game-theoretic strategies for imperfect-information adversarial team games has complex effects. Such technology could be used by a team of malicious players to exploit an interaction or a specific opponent. On the other hand, the technology could also be used defensively, to play in such a way as to minimize the value that can be extracted from the agent herself. Whether the technology has a positive or negative societal impact (or none) varies depending on the nature of the imperfect-information interaction and the way the technology is implemented. We believe that publishing the algorithm increases its dissemination, thereby helping even the playing field between educated expert players and ones who might be less privileged and could thus benefit more from algorithmic strategy support.

    \section*{Acknowledgments}
    This material is based on work supported by the National Science Foundation under grants IIS-1718457, IIS-1617590, IIS-1901403, and CCF-1733556, and the ARO under awards W911NF-17-1-0082 and W911NF2010081. Gabriele Farina is supported by a Facebook fellowship.

\bibliographystyle{custom_arxiv}
\bibliography{dairefs}

\clearpage
\appendix
\section{Theoretical Details}\label{sec:proofs}

\subsection{Representing Distributions of Play via Extensive-Form Correlation Plans}
As mentioned in the body, every distribution over \emph{randomized} stratregy profiles for the team members is equivalent to a different distribution over \emph{deterministic} strategy profiles by means of Kuhn's theorem~\citep{Kuhn53:Extensive}, one of the most fundamental results about extensive-form game playing. Specifically, given two independent mixed strategies $\vec{y}\tone\in\Y\tone$ and $\vec{y}\ttwo\in\Y\ttwo$ for the team members, let $\mu\tone$ and $\mu\ttwo$ be the distributions over normal-form plans $\Pi\tone, \Pi\ttwo$ equivalent to $\vec{y}\tone$ and $\vec{y}\ttwo$, respectively. Then, the distribution over reandomized strategy profiles that assignes probability 1 to $(\vec{y}\tone,\vec{y}\ttwo)$ is equivalent to the product distribution of $\mu\tone$ and $\mu\ttwo$, that is, the distirbution over $\Pi\tone\times\Pi\ttwo$ that picks a generic profile $(\pi\tone,\pi\ttwo)$ with probability $\pi\tone(\pi\tone)\times\pi\ttwo(\pi\ttwo)$. The reverse is also true: a product distribution over $\Pi\tone\times\Pi\ttwo$ is equivalent to a distribution over randomized profiles that picks exactly one profile with probability $1$.

We now show that a similar result holds when the distribution over normal-form plans is represented as an extensive-form correlation plan. First, we introduce the notion of \emph{product} correlation plan.

\begin{definition}\label{def:product xi}
    Let $\vxi \in \cV$ be a vector in the von Stengel-Forges polytope. We say that $\vxi$ is a \emph{product} correlation plan if
    \[
        \xi\team[\sigma\tone,\sigma\ttwo] = \xi\team[\sigma\tone,\emptyseq]\cdot\xi\team[\emptyseq,\sigma\ttwo]
    \]
    for all $(\sigma\tone,\sigma\ttwo)\in\Sigma\tone \rele\Sigma\ttwo$.
\end{definition}

\begin{lemma}\label{lem:product implies Xi}
    A product correlation plan is always an element of $\Xi\team$.
\end{lemma}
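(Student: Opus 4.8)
The plan is to exhibit an explicit distribution over \emph{deterministic} strategy profiles whose induced extensive-form correlation plan (via \eqref{eq:xi def}) is exactly $\vxi$, and to take that distribution to be a \emph{product} distribution. The first step is to read the two ``marginals'' of $\vxi$ as ordinary sequence-form strategies, one per team member: set $\vec{y}\tone \defeq (\xi\team[\sigma\tone,\emptyseq])_{\sigma\tone\in\Sigma\tone}$ and $\vec{y}\ttwo \defeq (\xi\team[\emptyseq,\sigma\ttwo])_{\sigma\ttwo\in\Sigma\ttwo}$. Since the empty sequence of one player forms a relevant pair with every sequence and every infoset of the other player, instantiating constraints \circled{1} and \circled{2} of \cref{def:vsf} at $\sigma\ttwo = \emptyseq$ yields precisely $\xi\team[\emptyseq,\emptyseq]=1$ and $\sum_{a\tone\in A_{I\tone}}\xi\team[(I\tone,a\tone),\emptyseq] = \xi\team[\sigma(I\tone),\emptyseq]$ for every $I\tone\in\I\tone$ — i.e.\ the $|\I\tone|$ constraints defining the sequence-form polytope $\Y\tone$ \citep{Koller96:Efficient} — so $\vec{y}\tone\in\Y\tone$, and symmetrically $\vec{y}\ttwo\in\Y\ttwo$ from constraint \circled{3}.

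Next I would invoke Kuhn's theorem \citep{Kuhn53:Extensive}: because the team members have perfect recall, each sequence-form strategy is induced by some distribution over reduced-normal-form plans. Pick $\mu\tone\in\Delta(\Pi\tone)$ and $\mu\ttwo\in\Delta(\Pi\ttwo)$ realizing $\vec{y}\tone$ and $\vec{y}\ttwo$, so that $\xi\team[\sigma\tone,\emptyseq]=\sum_{\pi\tone\in\Pi\tone(\sigma\tone)}\mu\tone(\pi\tone)$ and $\xi\team[\emptyseq,\sigma\ttwo]=\sum_{\pi\ttwo\in\Pi\ttwo(\sigma\ttwo)}\mu\ttwo(\pi\ttwo)$. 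Then form the product distribution $\mu\team\defeq\mu\tone\otimes\mu\ttwo\in\Delta(\Pi\tone\times\Pi\ttwo)$, with $\mu\team[(\pi\tone,\pi\ttwo)]\defeq\mu\tone(\pi\tone)\,\mu\ttwo(\pi\ttwo)$.

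Finally I would compute the correlation plan induced by $\mu\team$: for any relevant pair $(\sigma\tone,\sigma\ttwo)$ the index set $\Pi\tone(\sigma\tone)\times\Pi\ttwo(\sigma\ttwo)$ appearing in \eqref{eq:xi def} factorizes, so the induced entry equals $\big(\sum_{\pi\tone\in\Pi\tone(\sigma\tone)}\mu\tone(\pi\tone)\big)\big(\sum_{\pi\ttwo\in\Pi\ttwo(\sigma\ttwo)}\mu\ttwo(\pi\ttwo)\big)=\xi\team[\sigma\tone,\emptyseq]\cdot\xi\team[\emptyseq,\sigma\ttwo]$, which by the defining property of a product correlation plan (\cref{def:product xi}) equals $\xi\team[\sigma\tone,\sigma\ttwo]$. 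Hence $\mu\team$ induces exactly $\vxi$ on all relevant sequence pairs, so $\vxi\in\Xi\team$ by the definition of $\Xi\team$.

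I do not anticipate a serious obstacle: the proof is essentially a routine bookkeeping exercise. The only place requiring genuine care is the first paragraph — verifying that fixing one coordinate of the von Stengel--Forges constraints to $\emptyseq$ reproduces the single-player sequence-form constraints verbatim (in particular that $I\tone\rele\emptyseq$ holds for every $I\tone$, so that no relevant-pair constraint is lost) — together with stating Kuhn's theorem in the precise reduced-normal-form-plan form used in the paper so that the subsequent factorization step is immediate.
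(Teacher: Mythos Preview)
Your proposal is correct and follows essentially the same approach as the paper's proof: extract the two marginals as sequence-form strategies, lift them to distributions over reduced-normal-form plans via Kuhn's theorem, take the product distribution, and verify via factorization and the product-plan identity that it induces $\vxi$. Your extra care in verifying that fixing one coordinate to $\emptyseq$ in the von Stengel--Forges constraints recovers the single-player sequence-form constraints is a detail the paper leaves implicit.
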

\begin{proof}
    Let $\vxi$ be a product correlation plan. Since by definition, $\vxi\in\cV$, the vectors $\vec{y}\tone,\vec{y}\ttwo$ indexed over $\Sigma\tone$ and $\Sigma\ttwo$, repsectively, and defined as
    \[
        y[\sigma\tone] = \xi\team[\sigma\tone,\emptyseq], y[\sigma\ttwo] = \xi\team[\emptyseq,\sigma\ttwo]
    \]
    are sequence-form strategies. By Kuhn's theorem, there exist distributions $\mu\tone,\mu\ttwo$ over $\Pi\tone$ and $\Pi\ttwo$, respectively, such that
    \begin{align}
        y[\sigma\tone] &= \sum_{\pi\tone \in \Pi\tone(\sigma\tone)} \mu\tone[\pi\tone] \qquad\forall \sigma\tone \in \Sigma\tone,\label{eq:kuhn t1}\\
        y[\sigma\ttwo] &= \sum_{\pi\ttwo \in \Pi\ttwo(\sigma\ttwo)} \mu\ttwo[\pi\ttwo] \qquad\forall \sigma\ttwo \in \Sigma\ttwo\label{eq:kuhn t2}.
    \end{align}
    Consider the distribution $\mu\team$ over $\Pi\tone\times\Pi\ttwo$ defined as the product distribution $\mu\tone\otimes\mu\ttwo$, that is,
    \[
        \mu\team[\sigma\tone,\sigma\ttwo] \defeq \mu\tone[\pi\tone]\cdot\mu\ttwo[\pi\ttwo]
    \]
    for all $(\pi\tone,\pi\ttwo)\in\Pi\tone\times\Pi\ttwo$. We will show that is the extensive-form correlation plan corresponding to $\mu\team$ according to \eqref{eq:xi def}, that is,
    \[
        \xi\team[\sigma\tone,\sigma\ttwo]\defeq \sum_{\substack{\pi\tone\in\Pi\tone(\sigma\tone)\\\pi\ttwo\in\Pi\ttwo(\sigma\ttwo)}} \mu\team[\pi\tone,\pi\ttwo]
    \]
    for all $(\sigma\tone,\sigma\ttwo)\in\Sigma\tone \rele\Sigma\ttwo$. Indeed, using the fact that $\vxi$ is a \emph{product} correlation plan together with~\eqref{eq:kuhn t1} and~\eqref{eq:kuhn t2}:
    \begin{align*}
        \xi\team[\sigma\tone,\sigma\ttwo] &= \xi\team[\sigma\tone,\emptyseq]\cdot\xi\team[\emptyseq,\sigma\ttwo]\\
            &= y\tone[\sigma\tone]\cdot y\ttwo[\sigma\ttwo]\\
            &= \mleft(\sum_{\pi\tone \in \Pi\tone(\sigma\tone)} \mu\tone[\pi\tone]\mright)\mleft(\sum_{\pi\ttwo \in \Pi\ttwo(\sigma\ttwo)} \mu\ttwo[\pi\ttwo]\mright)\\
            &= \sum_{\substack{\pi\tone\in\Pi\tone(\sigma\tone)\\\pi\ttwo\in\Pi\ttwo(\sigma\ttwo)}} \mu\tone[\pi\tone]\cdot \mu\ttwo[\pi\ttwo]\\
            &= \sum_{\substack{\pi\tone\in\Pi\tone(\sigma\tone)\\\pi\ttwo\in\Pi\ttwo(\sigma\ttwo)}} \mu\team[\pi\tone,\pi\ttwo].
    \end{align*}
    This concludes the proof.
\end{proof}

\begin{lemma}\label{lem:product plans to singleton}
    An extensive-form correlation plan is equivalent to a distribution of play for the team that picks one profile of randomized strategies $(\vec{y}\tone,\vec{y}\ttwo)\in\Y\tone\times\Y\ttwo$ if and only if $\vxi$ is a product correlation plan. Furthermore, when that is the case, $y\tone[\sigma\tone] = \xi\team[\sigma\tone,\emptyseq], y\ttwo[\sigma\ttwo] = \xi\team[\emptyseq,\sigma\ttwo]$ for all $\sigma\tone\in\Sigma\tone, \sigma\ttwo\in\Sigma\ttwo$.
\end{lemma}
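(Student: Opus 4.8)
The plan is to prove the two directions of the biconditional by chaining Kuhn's theorem (applied separately to each team member) with the definition \eqref{eq:xi def} of the correlation plan induced by a distribution over deterministic profiles. The bridge I will use is the fact, recalled in the text just before the statement, that a distribution of play concentrated on a single randomized profile $(\vec{y}\tone,\vec{y}\ttwo)\in\Y\tone\times\Y\ttwo$ is equivalent to the \emph{product} distribution $\mu\tone\otimes\mu\ttwo$ over $\Pi\tone\times\Pi\ttwo$, where $\mu\tone\in\Delta(\Pi\tone)$ and $\mu\ttwo\in\Delta(\Pi\ttwo)$ are the normal-form representatives of $\vec{y}\tone$ and $\vec{y}\ttwo$ guaranteed by Kuhn's theorem.

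For the ``if'' direction I would observe that the required construction has already been carried out inside the proof of \cref{lem:product implies Xi}: there, starting from a product correlation plan $\vxi$, one sets $y\tone[\sigma\tone]\defeq\xi\team[\sigma\tone,\emptyseq]$ and $y\ttwo[\sigma\ttwo]\defeq\xi\team[\emptyseq,\sigma\ttwo]$, checks these are sequence-form strategies using $\vxi\in\cV$, invokes Kuhn's theorem to obtain $\mu\tone,\mu\ttwo$, and verifies via \eqref{eq:xi def} that $\vxi$ is the correlation plan induced by $\mu\tone\otimes\mu\ttwo$. Since $\mu\tone\otimes\mu\ttwo$ is exactly the deterministic-profile distribution equivalent to the point mass on $(\vec{y}\tone,\vec{y}\ttwo)$, this yields both the claim and the ``furthermore'' formulas, so for this direction it suffices to cite that computation.

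For the ``only if'' direction I would start from an extensive-form correlation plan $\vxi\in\Xi\team$ equivalent to a point mass on some $(\vec{y}\tone,\vec{y}\ttwo)$, take the inducing distribution $\mu\team$ to be $\mu\tone\otimes\mu\ttwo$ with $\mu\tone,\mu\ttwo$ the Kuhn representatives, and substitute into \eqref{eq:xi def}. Factoring the resulting double sum gives, for every relevant pair $(\sigma\tone,\sigma\ttwo)$,
\[
  \xi\team[\sigma\tone,\sigma\ttwo]
  = \mleft(\sum_{\pi\tone\in\Pi\tone(\sigma\tone)}\mu\tone[\pi\tone]\mright)\mleft(\sum_{\pi\ttwo\in\Pi\ttwo(\sigma\ttwo)}\mu\ttwo[\pi\ttwo]\mright)
  = y\tone[\sigma\tone]\cdot y\ttwo[\sigma\ttwo].
\]
Specializing to $\sigma\ttwo=\emptyseq$ and to $\sigma\tone=\emptyseq$ and using $y\tone[\emptyseq]=y\ttwo[\emptyseq]=1$ yields $\xi\team[\sigma\tone,\emptyseq]=y\tone[\sigma\tone]$ and $\xi\team[\emptyseq,\sigma\ttwo]=y\ttwo[\sigma\ttwo]$, whence $\xi\team[\sigma\tone,\sigma\ttwo]=\xi\team[\sigma\tone,\emptyseq]\cdot\xi\team[\emptyseq,\sigma\ttwo]$; and since $\vxi\in\Xi\team\subseteq\cV\team$ by \cref{prop:xi subset vsf}, this makes $\vxi$ a product correlation plan in the sense of \cref{def:product xi}, with the displayed identities being precisely the ``furthermore'' formulas.

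The only place where care is needed — and the only real obstacle — is making precise what ``equivalent'' means: I must ensure that the deterministic-profile distribution realizing $\vxi$ can be taken to be the \emph{product} $\mu\tone\otimes\mu\ttwo$, rather than merely some distribution inducing the same vector $\vxi$. This is exactly the content of the remark preceding the statement (point masses on randomized profiles correspond, via Kuhn's theorem, to product normal-form distributions), so invoking it closes the gap; everything else is the elementary factorization of a product sum shown above.
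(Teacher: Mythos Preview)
Your proposal is correct and follows essentially the same approach as the paper: both directions are argued exactly as you outline, with the ``if'' direction dispatched by citing the computation inside the proof of \cref{lem:product implies Xi}, and the ``only if'' direction obtained by passing to the product distribution $\mu\tone\otimes\mu\ttwo$ via Kuhn's theorem, factoring the double sum from~\eqref{eq:xi def}, specializing to $\sigma\ttwo=\emptyseq$ and $\sigma\tone=\emptyseq$, and finally invoking $\Xi\team\subseteq\cV\team$ to conclude that $\vxi$ is a product correlation plan.
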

\begin{proof}
    The proof of \cref{lem:product implies Xi} already shows that when $\vxi$ is a product correlation plan, it is equivalent to playing according to the distribution of play for the team with singleton support $(\vec{y}\tone,\vec{y}\ttwo)$, where $y\tone[\sigma\tone] = \xi\team[\sigma\tone,\emptyseq], y\ttwo[\sigma\ttwo] = \xi\team[\emptyseq,\sigma\ttwo]$ for all $\sigma\tone\in\Sigma\tone, \sigma\ttwo\in\Sigma\ttwo$. So, the only statement that remains to prove is that distributions $\mu\team$ over randomized strategy  profiles for the team members with a singleton support are mapped (\cref{eq:xi def}) to product correlation plans.

    Let $\{(\vec{y}\tone,\vec{y}\ttwo)\} \subseteq \Y\tone\times\Y\ttwo$ be the (singleton) support of $\mu\team$, and let $\mu\tone,\mu\ttwo$ be distributions over $\Pi\tone$ and $\Pi\ttwo$, respectively, equivalent to $\vec{y}\tone$ and $\vec{y}\ttwo$. Then,
    \begin{align}
        y[\sigma\tone] &= \sum_{\pi\tone \in \Pi\tone(\sigma\tone)} \mu\tone[\pi\tone] \qquad\forall \sigma\tone \in \Sigma\tone,\\
        y[\sigma\ttwo] &= \sum_{\pi\ttwo \in \Pi\ttwo(\sigma\ttwo)} \mu\ttwo[\pi\ttwo] \qquad\forall \sigma\ttwo \in \Sigma\ttwo.
    \end{align}
    Since by assumption the two team members sample strategies independently, their equivalent distribution of play over determinitic strategies is the product distribution $\mu\team \defeq \mu\tone \otimes \mu\ttwo$. Using \eqref{eq:xi def}, $\mu\team$ has a representation as extensive-form correlation plan given by
    \begin{align}
        \xi\team[\sigma\tone,\sigma\ttwo] &= \sum_{\substack{\pi\tone\in\Pi\tone(\sigma\tone)\\\pi\ttwo\in\Pi\ttwo(\sigma\ttwo)}} \mu\team[\pi\tone,\pi\ttwo] \nonumber\\
        &= \sum_{\substack{\pi\tone\in\Pi\tone(\sigma\tone)\\\pi\ttwo\in\Pi\ttwo(\sigma\ttwo)}} \mu\tone[\pi\tone]\cdot \mu\ttwo[\pi\ttwo] \nonumber\\
        &= \mleft(\sum_{\pi\tone \in \Pi\tone(\sigma\tone)} \mu\tone[\pi\tone]\mright)\mleft(\sum_{\pi\ttwo \in \Pi\ttwo(\sigma\ttwo)} \mu\ttwo[\pi\ttwo]\mright) \nonumber\\
        &= y\tone[\sigma\tone]\cdot y\ttwo[\sigma\ttwo]\label{eq:prod of ys}
    \end{align}
    for all $(\sigma\tone,\sigma\ttwo)\in\Sigma\tone\times\Sigma\ttwo$. In particular, choosing $\sigma\ttwo = \emptyseq$ in \eqref{eq:prod of ys}, and using the fact that $y\ttwo[\emptyseq] = 1$, we obtain
    \[
        \xi\team[\sigma\tone,\emptyseq] = y\tone[\sigma\tone] \qquad \forall\ \sigma\tone \in \Sigma\tone.
    \]
    Similarly,
    \[
        \xi\team[\emptyseq,\sigma\ttwo] = y\ttwo[\sigma\ttwo] \qquad \forall\ \sigma\ttwo \in \Sigma\ttwo.
    \]
    Substituting the last two equalities into \eqref{eq:prod of ys} we can write
    \[
        \xi\team[\sigma\tone,\sigma\ttwo] = \xi\team[\sigma\tone,\emptyseq]\cdot \xi\team[\emptyseq,\sigma\ttwo]
    \]
    for all $(\sigma\tone,\sigma\ttwo)\in\Sigma\tone\times\Sigma\ttwo$. That, together with the inclusion $\Xi\team \subseteq \cV$, shows that $\vxi$ is a product correlation plan.
\end{proof}

\paragraph{Semi-randomized correlation plans are product plans}

In the body we mentioned that semi-randomized correlation plans correspond to a distribution of play where one team member plays a deterministic strategy and the other team member plays a randomized strategy. We now give more formal grounding that that assertion.

\begin{lemma}\label{lem:semi rand are product}
    Let $\vxi \in \Xi^*\tone \cup \Xi^*\ttwo$ be a semi-randomized plan. Then, $\vxi$ is a product plan.
\end{lemma}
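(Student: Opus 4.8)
The plan is to prove \cref{lem:semi rand are product} by reducing it to \cref{lem:product plans to singleton}: a vector $\vxi \in \cV$ is a product plan if and only if it is the extensive-form correlation plan of a distribution of play whose support is a single profile of randomized strategies $(\vec{y}\tone, \vec{y}\ttwo)$. So it suffices to show that every semi-randomized plan $\vxi \in \Xi^*\tone \cup \Xi^*\ttwo$ corresponds to such a singleton-support distribution. By symmetry of the definitions of $\Xi^*\tone$ and $\Xi^*\ttwo$ (they differ only by swapping the roles of the two team members), I would carry out the argument for $\vxi \in \Xi^*\tone$; the case $\vxi \in \Xi^*\ttwo$ is identical after relabeling.

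First I would extract the deterministic player. Fix $\vxi \in \Xi^*\tone \subseteq \cV\team$. Define the vector $\vec{y}\ttwo$ indexed over $\Sigma\ttwo$ by $y\ttwo[\sigma\ttwo] \defeq \xi\team[\emptyseq, \sigma\ttwo]$. Setting $\sigma\tone = \emptyseq$ in the von Stengel-Forges constraints \circled{1} and \circled{3} of \cref{def:vsf} shows that $\vec{y}\ttwo$ satisfies $y\ttwo[\emptyseq] = 1$ and the sequence-form flow constraints, so $\vec{y}\ttwo \in \Y\ttwo$; by hypothesis all its entries are in $\{0,1\}$, so it is a deterministic (pure) sequence-form strategy for \ttwopl{}, corresponding to some reduced-normal-form plan $\pi\ttwo^\star \in \Pi\ttwo$. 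Symmetrically, define $\vec{y}\tone$ by $y\tone[\sigma\tone] \defeq \xi\team[\sigma\tone, \emptyseq]$, which lies in $\Y\tone$ by the same argument (now using $\sigma\ttwo = \emptyseq$ in constraints \circled{1} and \circled{2}), though it need not be integral.

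The main step, and the place where the real work lies, is to show that the ``mixed'' constraints \circled{2} of the von Stengel-Forges polytope, combined with the integrality of the $\xi\team[\emptyseq, \sigma\ttwo]$ entries, force $\xi\team[\sigma\tone, \sigma\ttwo] = \xi\team[\sigma\tone, \emptyseq] \cdot \xi\team[\emptyseq, \sigma\ttwo]$ for every relevant pair $(\sigma\tone, \sigma\ttwo)$. The clean way to see this is by induction on the ``depth'' of $\sigma\ttwo$ in \ttwopl{}'s sequence tree. For $\sigma\ttwo = \emptyseq$ the claim is trivial since $\xi\team[\emptyseq,\emptyseq] = 1$. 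For the inductive step, write $\sigma\ttwo = (I\ttwo, a\ttwo)$ with parent sequence $\sigma(I\ttwo)$. Constraint \circled{3} of \cref{def:vsf} applied at $I\ttwo$ gives, for every $\sigma\tone \rele I\ttwo$, that $\sum_{a \in A_{I\ttwo}} \xi\team[\sigma\tone, (I\ttwo, a)] = \xi\team[\sigma\tone, \sigma(I\ttwo)]$, and the same with $\sigma\tone$ replaced by $\emptyseq$. Since $\xi\team[\emptyseq, (I\ttwo, \cdot)]$ is a $0/1$ vector summing to $\xi\team[\emptyseq, \sigma(I\ttwo)] \in \{0,1\}$: if this equals $0$ then all $\xi\team[\sigma\tone, (I\ttwo, a)]$ must also vanish (being nonnegative with sum zero by the inductive hypothesis on $\sigma(I\ttwo)$ and a monotonicity argument), so both sides of the product identity are $0$; if it equals $1$, exactly one action $a^\star$ has $\xi\team[\emptyseq, (I\ttwo, a^\star)] = 1$ and the others $0$, so by nonnegativity again $\xi\team[\sigma\tone, (I\ttwo, a)] = 0$ for $a \neq a^\star$, whence $\xi\team[\sigma\tone, (I\ttwo, a^\star)] = \xi\team[\sigma\tone, \sigma(I\ttwo)] = \xi\team[\sigma\tone, \emptyseq]\cdot\xi\team[\emptyseq, \sigma(I\ttwo)] = \xi\team[\sigma\tone,\emptyseq]\cdot\xi\team[\emptyseq,(I\ttwo,a^\star)]$ using the inductive hypothesis. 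The subtle point to handle carefully is the ``pruning'' case where a $0$-probability entry of \ttwopl{}'s strategy forces a whole subtree of entries to zero: one must make sure the relevant-sequence-pair indexing does not obstruct this, and that nonnegativity of $\vxi$ (guaranteed since $\vxi \in \cV\team \subseteq \mathbb{R}_{\ge 0}^{|\Sigma\tone\rele\Sigma\ttwo|}$) genuinely propagates down the tree.

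Once the product identity $\xi\team[\sigma\tone, \sigma\ttwo] = \xi\team[\sigma\tone, \emptyseq]\cdot \xi\team[\emptyseq, \sigma\ttwo]$ is established for all relevant pairs, $\vxi$ is a product correlation plan by \cref{def:product xi}, which completes the proof. I expect the depth-induction bookkeeping and the careful treatment of the zero-probability pruning to be the only real obstacle; everything else is a direct unwinding of the von Stengel-Forges constraints and the definition of $\Xi^*\tone$.
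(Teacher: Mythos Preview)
Your approach is essentially the same as the paper's: induction on the depth of $\sigma\ttwo$, with a case split driven by the $0/1$ values of $\xi\team[\emptyseq,\cdot]$. The opening reduction to \cref{lem:product plans to singleton} is unnecessary, since in your ``main step'' you end up proving the product identity $\xi\team[\sigma\tone,\sigma\ttwo] = \xi\team[\sigma\tone,\emptyseq]\cdot\xi\team[\emptyseq,\sigma\ttwo]$ directly, which is precisely \cref{def:product xi}; the paragraph extracting $\vec{y}\tone,\vec{y}\ttwo$ can be dropped.

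There is one genuine gap, and you have correctly located it yourself. In the case $\xi\team[\emptyseq,\sigma(I\ttwo)] = 1$, you write ``by nonnegativity again $\xi\team[\sigma\tone,(I\ttwo,a)] = 0$ for $a\neq a^\star$''. Nonnegativity alone does not give this: the flow constraint \circled{3} only tells you $\sum_{a}\xi\team[\sigma\tone,(I\ttwo,a)] = \xi\team[\sigma\tone,\sigma(I\ttwo)] = \xi\team[\sigma\tone,\emptyseq]$, which need not vanish, so you cannot conclude the individual summands for $a\neq a^\star$ are zero the way you could in the $0$ case. What is actually needed is the implication $\xi\team[\emptyseq,\sigma\ttwo] = 0 \Rightarrow \xi\team[\sigma\tone,\sigma\ttwo] = 0$ for all $\sigma\tone\rele\sigma\ttwo$, which is proved by a \emph{separate} induction on the depth of $\sigma\tone$ using constraint \circled{2} (the \tonepl{}-side flow constraints). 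The paper handles this by invoking that fact as a standalone lemma (\cref{lem:v zero}, taken from \citet{Farina20:Polynomial}) rather than folding it into the main induction. Once you have that lemma, your Case~2 goes through exactly as in the paper.
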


We reuse some ideas that already appeared in \citet{Farina20:Polynomial} to prove \cref{lem:semi rand are product}. In particular, in the proof we will make use of the following lemma.
\begin{lemma}[\citeauthor{Farina20:Polynomial} (\citeyear{Farina20:Polynomial}, Lemma 6)]\label{lem:v zero}
    Let $\vxi \in \cV$. For all $\sigma\tone \in \Sigma\tone$ such that $\xi\team[\sigma\tone, \emptyseq] = 0$, $\xi\team[\sigma\tone, \sigma\ttwo] = 0$ for all $\sigma\ttwo \in \Sigma\ttwo: \sigma\tone\rele\sigma\ttwo$. Similarly, for all $\sigma\ttwo \in \Sigma\ttwo$ such that $\xi\team[\emptyseq, \sigma\ttwo] = 0$, $\xi\team[\sigma\tone, \sigma\ttwo] = 0$ for all $\sigma\tone \in \Sigma\tone: \sigma\tone \rele \sigma\ttwo$.
\end{lemma}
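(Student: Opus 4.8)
The plan is to prove the first statement by induction on the depth of $\sigma\ttwo$ in the sequence tree of player $\ttwopl$ (with $\emptyseq$ at depth $0$ and each $(I\ttwo,a\ttwo)$ one level below its parent $\sigma(I\ttwo)$); the second statement then follows verbatim after swapping the roles of $\tonepl$ and $\ttwopl$ and using constraint $\circled{2}$ of \cref{def:vsf} in place of $\circled{3}$. Fix $\sigma\tone$ with $\xi\team[\sigma\tone,\emptyseq]=0$; I will show $\xi\team[\sigma\tone,\sigma\ttwo]=0$ for every $\sigma\ttwo$ with $\sigma\tone\rele\sigma\ttwo$. The base case $\sigma\ttwo=\emptyseq$ is exactly the hypothesis.

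For the inductive step write $\sigma\ttwo=(I\ttwo,a\ttwo)$ and assume $\sigma\tone\rele\sigma\ttwo$. Since $\sigma\tone\rele(I\ttwo,a\ttwo)$ immediately gives $\sigma\tone\rele I\ttwo$, constraint $\circled{3}$ of \cref{def:vsf} applies to this pair and reads
\[
    \sum_{a\ttwo'\in A_{I\ttwo}} \xi\team[\sigma\tone,(I\ttwo,a\ttwo')] = \xi\team[\sigma\tone,\sigma(I\ttwo)].
\]
If the right-hand side is $0$, then, because every summand on the left is nonnegative ($\vxi\in\cV$ lives in the nonnegative orthant), each summand vanishes; in particular $\xi\team[\sigma\tone,\sigma\ttwo]=0$, closing the induction. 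To force the right-hand side to $0$ I apply the inductive hypothesis to $\sigma(I\ttwo)$, which sits at strictly smaller depth---but this is legitimate only once I know the pair $(\sigma\tone,\sigma(I\ttwo))$ is itself relevant, so that its entry is defined and the hypothesis is genuinely a statement about it.

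Hence the crux, and the step I expect to require the most care, is the relevance-propagation claim that $\sigma\tone\rele(I\ttwo,a\ttwo)$ forces $\sigma\tone\rele\sigma(I\ttwo)$. When $\sigma\tone=\emptyseq$, or when $\sigma(I\ttwo)=\emptyseq$, the pair has an empty component and is relevant by definition (and the base-case hypothesis supplies the zero). Otherwise write $\sigma\tone=(I\tone,a\tone)$ and $\sigma(I\ttwo)=(J\ttwo,b\ttwo)$; the hypothesis means $I\tone\connected I\ttwo$, and I must deduce $I\tone\connected J\ttwo$. By perfect recall every node $w\in I\ttwo$ has an ancestor $w'\in J\ttwo$ on its root-to-$w$ path, while $I\tone\connected I\ttwo$ furnishes $v\in I\tone$ and $w\in I\ttwo$ with one an ancestor of the other. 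If $w$ is an ancestor of $v$, then so is its ancestor $w'$, giving $I\tone\connected J\ttwo$; if instead $v$ is an ancestor of $w$, then $v$ and $w'$ both lie on the root-to-$w$ path and are therefore comparable, again giving $I\tone\connected J\ttwo$. This tree-combinatorial step is the only place the game structure (perfect recall and ancestor comparability) enters; everything else is bookkeeping with nonnegativity and the mass-conservation constraints of $\cV$.
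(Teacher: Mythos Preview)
The paper does not supply its own proof of this lemma: it is quoted verbatim as Lemma~6 of \citet{Farina20:Polynomial} and used as a black box in the proof of \cref{lem:semi rand are product}. So there is nothing in the present paper to compare against.

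That said, your argument is correct and is essentially the canonical one. The induction on the depth of $\sigma\ttwo$, combined with nonnegativity of $\vxi$ and the mass-conservation constraint~$\circled{3}$, forces every child entry to vanish once the parent entry does. The only nontrivial step is exactly the one you flag: checking that $\sigma\tone\rele I\ttwo$ implies $\sigma\tone\rele\sigma(I\ttwo)$, so that the right-hand side $\xi\team[\sigma\tone,\sigma(I\ttwo)]$ is even defined and the inductive hypothesis applies. Your tree-combinatorial argument (perfect recall places an ancestor $w'\in J\ttwo$ on the root-to-$w$ path, and two nodes on the same root-to-leaf path are comparable) handles both cases cleanly. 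Note that this relevance-propagation fact is in any case implicit in \cref{def:vsf} itself, since constraint~$\circled{3}$ would be ill-typed otherwise; you are simply making explicit what the definition already presumes.
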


\begin{proof}[Proof of \cref{lem:semi rand are product}]
    We will only show the proof for the case $\vxi \in \Xi^*\tone$. The other case ($\vxi \in \Xi^*\ttwo$) is symmetric.

    To show that
    \[
        \xi\team[\sigma\tone,\sigma\ttwo] =\xi\team[\sigma\tone,\emptyseq]\cdot\xi\team[\emptyseq,\sigma\ttwo]
    \]
    for all $(\sigma\tone,\sigma\ttwo)\in\Sigma\tone\rele\Sigma\ttwo$, we perform induction on the depth of the sequence $\sigma\ttwo$. The depth $\dep(\sigma\ttwo)$ of a generic sequence $\sigma\ttwo = (J,b) \in \Sigma\ttwo$ of Player~$i$ is defined as the number of actions that Player \ttwopl{} plays on the path from the root of the tree down to action $b$ at information set $J$, included. Conventionally, we let the depth of the empty sequence be $0$.

    The base case for the induction proof corresponds to the case where $\sigma\ttwo$ has depth $0$, that is, $\sigma\ttwo = \emptyseq$. In that case, the theorem is clearly true, because $\xi\team[\emptyseq,\emptyseq]= 1$ as part of the von Stengel-Forges constraints (\cref{def:vsf}).

    Now, suppose that the statement holds as long as $\dep(\sigma\ttwo) \le d$. We will show that the statement will hold for any $(\sigma\tone,\sigma\ttwo) \in \Sigma\tone \rele \Sigma\ttwo$ such that $\dep(\sigma\ttwo) \le d+1$. Indeed, consider $(\sigma\tone,\sigma\ttwo) \in \Sigma\tone \rele \Sigma\ttwo$ such that $\sigma\ttwo = (J, b)$ with $\dep(\sigma\ttwo) = d+1$.
    
    There are only two possible cases:
    \begin{itemize}
        \item Case 1: $\xi\team[\emptyseq, \sigma\ttwo] = 0$. From \cref{lem:v zero}, $\xi\team[\sigma\tone, \sigma\ttwo] = 0$ and the statement holds.
        \item Case 2: $\xi\team[\emptyseq, \sigma\ttwo] = 1$.
    From the von Stengel-Forges constraints, $\xi\team[\emptyseq, \sigma(J)] = \sum_{b'\in A_J} \xi\team[\emptyseq, (J, b')] = 1 + \sum_{b' \in A_J, b' \neq b} \xi\team[\emptyseq, (J, b')] \ge 1$. Hence, because all entries of $\xi\team[\emptyseq, \sigma_2]$ are in $\{0,1\}$ by definition of $\Xi^*\tone$, it must be $\xi\team[\emptyseq, \sigma(J)] = 1$ and $\xi\team[\emptyseq, (J,b')] = 0$ for all $b' \in A_J, b' \neq b$.

    Using the inductive hypothesis, we have that
    \begin{equation}\label{eq:amended}
        \xi\team[\sigma\tone, \sigma(J)] = \xi\team[\sigma\tone,\emptyseq]\cdot\xi\team[\emptyseq,\sigma(J)] = \xi\team[\sigma\tone,\emptyseq]
    \end{equation}
    for all $\sigma\tone \in \Sigma\tone, \sigma\tone\rele\sigma(J)$.
    On the other hand, since $\xi\team[\emptyseq, (J,b')] = 0$ for all $b' \in A_J, b' \neq b$, from \cref{lem:v zero} we have that
    \begin{equation}\label{eq:lem5 step2}
        \xi\team[\sigma\tone, (J,b')] = 0 \quad\forall \sigma\tone \rele J, b' \neq b.
    \end{equation}
    Hence, summing over all $b' \in A_J$ and using the von Stengel-Forges constraints, we get
    \begin{align*}
         \xi\team[\sigma\tone,\emptyseq]\cdot \xi\team[\emptyseq, \sigma\ttwo]
        &= \xi\team[\sigma\tone, \sigma(J)] \\
        &= \sum_{b' \in A_J} \xi\team[\sigma\tone,(J, b')]\\
        &= \xi\team[\sigma\tone, (J, b)] = \xi\team[\sigma\tone, \sigma\ttwo]
    \end{align*}
    for all $\sigma\tone \rele (J, b)$. This concludes the proof by induction.\qedhere
    \end{itemize}
\end{proof}

So, from \cref{lem:product plans to singleton} it follows that semi-randomized plans correspond to distributions of play over randomized profiles with the singleton support $(\vec{y}\tone,\vec{y}\ttwo) \in \Y\tone\times\Y\ttwo$. Furthermore, because of the second part of \cref{lem:product plans to singleton}, when $\vxi\in\Xi\tone^*$, $\vec{y}\ttwo[\sigma\ttwo] \in \{0,1\}$ for all $\sigma\ttwo \in \Sigma\ttwo$, which means that $\vec{y}\ttwo$ is a deterministic strategy for Player \ttwopl{} (a similar statement holds for $\vxi\in\Xi^*\ttwo$).

\paragraph{Convex combinations of product plans}
Both of the algorithms we presented in the paper ultimately produce an extensive-form correlation plan $\vxi$ that is a convex combination of semi-randomized plans $\vxi^{(1)}, \dots, \vxi^{(n)}$, that is, of the form
\[
    \vxi  = \lambda^{(1)} \vxi^{(1)} + \cdot + \lambda^{(n)}\vxi^{(n)}
\]
for $\lambda^{(i)} \ge 0$ such that $\lambda^{(1)} + \dots + \lambda^{(n)} = 1$. Since semi-randomized correlation plans are product correlation plans (\cref{lem:semi rand are product}), from \cref{lem:product plans to singleton} each $\vxi^{(i)}$ is equivalent to the team playing a single profile of randomized strategies $(\vec{y}\tone^{(i)}, \vec{y}\ttwo^{(i)}) \in \Y\tone\times\Y\ttwo$ with probability $1$. By linearity, it is immediate to show that $\vxi$ is equivalent to playing according to the distribution over randomized strategies for the team that picks $(\vec{y}\tone^{(i)}, \vec{y}\ttwo^{(i)})$ with probability $\lambda^{(i)}$.

\subsection{TMECor Formulation Based on Extensive-Form Correlation Plans}
\proptmecorlp*
\begin{proof}
    We follow the steps mentioned in the body, starting from the bilinear saddle point problem formulation of the problem of computing a TMECor strategy for the team:
    \[
        \argmax_{\vec{\xi}\team \in \Xi\team} \min_{\vec{y}\opp \in \Y\opp} \sum_{z\in Z} \hat{u}\team(z) \xi\team[\sigma\tone(z),\sigma\ttwo(z)] y[\sigma\opp(z)].
    \]
    Expanding the constraint $\vec{y}\opp \in \Y\opp$ using the \emph{sequence-form constraints}~\citep{Koller96:Efficient,Stengel96:Efficient}, the inner minimization problem is
    \begin{equation*}
        (P): \mleft\{\hspace{-1.25mm}\begin{array}{l}
            \displaystyle
            \min_{\vec{y}\opp}~~~ \sum_{z\in Z} \hat{u}\team(z) \xi\team[\sigma\tone(z),\sigma\ttwo(z)] y[\sigma\opp(z)]\\
            ~\circled{\normalfont 1}~ \displaystyle -y[\sigma(I)] + \sum_{a \in A_I} y\opp[(I,a)] = 0 \qquad\forall I \in \I\opp\\
            ~\circled{\normalfont 2}~ \displaystyle y\opp[\emptyseq] = 1\\[2mm]
            ~\circled{\normalfont 3}~ \displaystyle y\opp[\sigma\opp] \ge 0 \qquad\forall\ \sigma\opp\in\Sigma\opp.
        \end{array}\mright.
    \end{equation*}
    Introducing the free dual variables $\{v_I : I\in \I\opp\}$ for Constraint \circled{1}, and the free dual variable $v_\emptyseq$ for Constraint \circled{2}, we obtain the dual linear program
    \begin{equation*}
        (D): \mleft\{\hspace{-1.25mm}\begin{array}{l}
            \displaystyle
            \max_{v_I, v_\emptyseq}~~~ v_\emptyseq,
            \quad\text{\normalfont subject to:}\\
            ~\circled{\normalfont 1}~ \displaystyle v_I - ~\sum_{\mathclap{\substack{I' \in \mathcal{I}\opp\\\sigma\opp(I') = (I, a)}}}~ v_{I'} \le ~\sum_{\mathclap{\substack{z \in Z\\\sigma\opp(z) = (I,a)}}} \hat{u}\team(z)\xi\team[\sigma\tone(z),\sigma\ttwo(z)]\\[-4mm]
            \hspace{5cm}\forall\, (I,\!a) \!\in\! \Sigma\opp\!\!\setminus\! \{\emptyseq\}\\[2mm]
            ~\circled{\normalfont 2}~\displaystyle v_\emptyseq - ~\sum_{\mathclap{\substack{I' \in \mathcal{I}\opp\\\sigma\opp(I') = \emptyseq}}}~ v_{I'} \le ~\sum_{\mathclap{\substack{z \in Z\\\sigma\opp(z) = \emptyseq}}} \hat{u}\team(z)\xi\team[\sigma\tone(z),\sigma\ttwo(z)]\\[8mm]
            ~\circled{\normalfont 3}~v_\emptyseq\textnormal{ free}, v_{I} \textnormal{ free } \quad \forall\ I \in \mathcal{I}\opp.
        \end{array}\mright.
    \end{equation*}
    So, $\vxi$ is a TMECor if and only if it is a solution of $\argmax_{\vxi\in\Xi\team} (D)$, which is exactly the statement.
\end{proof}

\subsection{Semi-Randomized Correlation Plans}

\propconvexhull*
\begin{proof}
    We will show that $\Xi\team = \co \Xi\tone^*$. The proof that $\Xi\team = \co\Xi\ttwo^*$ is symmetric.

    We will break the proof of $\Xi\team = \co\Xi\tone^*$ into two parts:
    \begin{itemize}[leftmargin=10mm]
        \item[($\subseteq$)] In the first part of the proof, we argue that $\Xi\tone^* \subseteq \Xi\team$. This is straightforward: from \cref{lem:semi rand are product} we know that all elements of $\Xi\tone^*$ are product correlation plans (\cref{def:product xi}), which implies that $\Xi\tone^*\subseteq \Xi\team$ by \cref{lem:product implies Xi}. Since convex hulls preserve inclusions, we have
        \[
            \co\Xi\tone^* \subseteq \co\Xi\team,
        \]
        which is exactly the statement $\Xi\tone^* \subseteq \Xi\team$ upon using the known fact that $\Xi\team$ is a convex polytope and therefore $\co\Xi\team = \Xi\team$.

        \item[($\supseteq$)] To complete the proof, we now argue that the reverse inclusion, namely $\Xi\team \subseteq \co\Xi\tone^*$, also holds. Let $f : \mu\team \mapsto \xi\team$ be the mapping from the distribution of play $\mu\team\in\Delta(\Pi\tone\times\Pi\ttwo)$ to its corresponding extensive-form correlation plan defined in \cref{eq:xi def}. By definition, $\Xi\team = f(\Delta(\Pi\tone\times\Pi\ttwo))$. Let $\one_{(\pi\tone,\pi\ttwo)}$ denote the distribution of play with singleton support $(\pi\tone,\pi\ttwo)$, that is, the distribution of play that assigns the deterministic strategy profile $(\pi\tone,\pi\ttwo)$ for the team with probability $1$. Since $f$ is linear, and since
        \[
            \Delta(\Pi\tone\times\Pi\ttwo) = \co\{\one_{(\pi\tone,\pi\ttwo)}:\pi\tone\in\Pi\tone, \pi\ttwo\in\Pi\ttwo\},
        \]
        we have 
        \[
            \Xi\team = \co\{f(\one_{(\pi\tone,\pi\ttwo)}):\pi\tone\in\Pi\tone, \pi\ttwo\in\Pi\ttwo\}.
        \]
        Hence, to conclude the proof of this part, it will be enough to show that for each $\pi\tone\in\Pi\tone, \pi\ttwo\in\Pi\ttwo$, it holds that $f(\one_{(\pi\tone,\pi\ttwo)}) \in \Xi\tone^*$. Since $\one_{(\pi\tone,\pi\ttwo)}$ assigns probability $1$ to one profile and $0$ to all other profiles, $f(\one_{(\pi\tone,\pi\ttwo)})$ is an extensive-form correlation plan whose entris are all in $\{0,1\}$. So, in particular, $f(\one_{(\pi\tone,\pi\ttwo)}) \in \Xi^*\tone$. This concludes the proof of the inclusion $\Xi\team \subseteq \co\Xi\tone^*$.
    \end{itemize}
    Together, the two statements that we just prove show that $\Xi\team = \co\Xi\tone^*$.

    Finally, using the fact that unions and convex hulls commute, we have
    \[
        \co (\Xi\tone^* \cup \Xi\ttwo^*) = (\co\Xi\tone^*)\cup(\co\Xi\ttwo^*) = \Xi\team \cup \Xi\team = \Xi\team,
    \]
    thereby concluding the proof.
\end{proof}

\section{Game Instances}\label{sec:exp_appendix}

The size of the parametric instances we use as benchmark is
described in \cref{table:rele vs prod}. In the following, we provide a detailed explanation of the rules of each game.

\paragraph{Kuhn poker} Two-player Kuhn poker was originally proposed by \citet{Kuhn50:Simplified}. We employ the three-player variation described in \citet{Farina18:Ex}. In a three-player Kuhn poker game with rank $r$ there are $r$ possible cards. At the beginning of the game, each player pays one chip to the pot, and each player is dealt a single private card. The first player can check or bet, i.e., putting an additional chip in the pot. Then, the second player can check or bet after a first player's check, or fold/call the first player's bet. If no bet was previously made, the third player can either check or bet. Otherwise, the player has to fold or call. After a bet of the second player (resp., third player), the first player (resp., the first and the second players) still has to decide whether to fold or to call the bet. At the showdown, the player with the highest card who has not folded wins all the chips in the pot.

\paragraph{Goofspiel} This bidding game was originally introduced by \citet{Ross71:Goofspiel}. We use a 3-rank variant, that is, each player has a hand of cards with values $\{-1, 0, 1\}$. A third stack of cards with values $\{-1,0,1\}$ is shuffled and placed on the table. At each turn, a prize card is revealed, and each player privately chooses one of his/her cards to bid, with the highest card winning the current prize. In case of a tie, the prize is split evenly among the winners. After 3 turns, all the prizes have been dealt out and the payoff of each player is computed as follows: each prize card’s value is equal to its face value and the players’ scores are computed as the sum of the values of the prize cards they have won.

\paragraph{Goofspiel with limited information} This is a variant of Goofspiel introduced by \citet{Lanctot09:Monte}. In this variation, in each turn the players do not reveal the cards that they have played. Rather, they show their cards to a fair umpire, which determines which player has played the highest card and should therefore received the prize card. In case of tie, the umpire directs the players to split the prize evenly among the winners, just like in the Goofspiel game. This makes the game strategically more challenging as players have less information regarding previous opponents’ actions. 

\paragraph{Leduc poker} We use a three-player version of the classical Leduc hold'em poker introduced by \citet{Southey05:Bayes}. We employ game instances of rank 3, in which the deck consists of three suits with 3 cards each. Our instances are parametric in the maximum number of bets, which in limit hold'em is not necessarely tied to the number of players. The maximum number of raise per betting round can be either 1, 2 or 3. As the game starts players pay one chip to the pot. There are two betting rounds. In the first one a single private card is dealt to each player while in the second round a single board card is revealed. The raise amount is set to 2 and 4 in the first and second round, respectively.

\paragraph{Liar's dice} Liar's dice is another standard benchmark introduced by \citet{Lisy15:Online}. In our three-player implementation, at the beginning of the game each of the three players privately rolls an unbiased $k$-face die. Then, the three players alternate in making (potentially false) claims about their toss. The first player begins bidding, announcing any face value up to $k$ and the minimum number of dice that the player believes are showing that value among the dice of all the players. Then, each player has two choices during their turn: to make a higher bid, or to challenge the previous bid by declaring the previous bidder a "liar". A bid is higher than the previous one if either the face value is higher, or the number of dice is higher. If the current player challenges the previous bid, all dice are revealed. If the bid is valid, the last bidder wins and obtains a reward of +1 while the challenger obtains a negative payoff of -1. Otherwise, the challenger wins and gets reward +1, and the last bidder obtains reward of -1. All the other players obtain reward 0. We test our algorithms on Liar's dice instances with $k=3$ and $k=4$.

\section{Additional Experimental Results}\label{sec:additional_results}

All experiments were run 10 times, and the experimental tables show average run times. We always use the same random seed to sample no-regret strategies for the team members in the seeding phase of our column-generation algorithm. The seed was never changed, and we don't treat it as a hyperparameter. So, all algorithms are deterministic, and the only source of randomness in the run time is due to system load. Consequently, we observed small standard deviations in the run times, less than $10\%$ in all cases.

We used the same time limit for FTP that was found to be beneficial by the original authors~\citep{Farina18:Ex}, namely $15$ seconds.
For FTP and CG-18, we used the original implementations, with permission from the authors. In all algorithms, we observed that the majority of time is spent within Gurobi.

Table \ref{tab:colgen opp 1} and Table \ref{tab:colgen opp 2} show the comparison between our column-generation algorithm, FTP, and CG-18 when the opponent plays as the first and as the second player, respectively.

\begin{table*}
    \sisetup{detect-all=true,scientific-notation=fixed,fixed-exponent=0,round-mode=places,
    round-precision=3}
    \setlength{\tabcolsep}{4.5pt}\small\centering
\begin{tikzpicture}
\node[anchor=south west] at (0, 0) {
    \begin{tabular}{c||rr|rrr|r||rr||rrr|r}
        \toprule
            \multirow{2}{*}{\!\bf Game}  & \multicolumn{2}{c|}{\bf Ours} & \multicolumn{3}{c|}{\bf Fictitious Team Play (FTP)} & \bf \multirow{2}{*}{CG-18} & \multicolumn{2}{c||}{\bf Pricers} & \multicolumn{3}{c|}{\bf Team utility after seeding} & \bf TMECor\\
            &Seeded & \!Not seed.& $\epsilon=50\%$ & $\epsilon=10\%$ & $\epsilon=1\%$ & & Relax.\! & MIP & $m = 1$ &  \num[scientific-notation=false]{1000} & \num[scientific-notation=false]{10000} & \multicolumn{1}{c}{\bf value}\\
        \midrule
            {\bf [A]} & \n{0.0016667048} & \n{0.000666658} & \n{0.0}$\mathrlap{^\dagger}$ & \n{15.0}$\mathrlap{^\dagger}$ & \n{155.0}$^\dagger$& \n{0.066486} & 5 & 0 & \num{-0.5666667} & \num{-0.1333333} & \num{-0.1333333} & \num{0} \\
            {\bf [B]} & \n{0.0208333333} & \n{0.00316663} & \n{0} & \n{999.0} & \tl & \n{1.006949} & 0 & 3 & \num{-0.375} & \num{0.0368957} & \num{0.0378788} & \num{0.0379}\\
            {\bf [C]} & \n{5.692666689554851} & \n{5.79283328851064} & \n{456.0} & \tl & \tl & \tl & 8 & 41 & \num{-0.1660857} & \num{0.05834519999999999} & \num{0.058494800000000007} & \num{0.0664}\\
        \midrule
            {\bf [D]} & \n{0.18599998950958252} & \n{0.3039999802907308} & \n{0.0023} & \tl & \tl & \n{116.941626} & 19 & 0 & \num{-0.492424} & \num{0.2514161} & \num{0.2516204} & \num{0.2524}\\
            {\bf [E]} & \n{0.46383337179819745} & \n{0.859500010808309} & \n{0.0064} & \tl & \tl & \n{1397.893382} & 33 & 0 & \num{-1.0} & \num{0.2494477} & \num{0.2534392} & \num{0.2534}\\
        \midrule
            {\bf [F]} & \n{2.139} & \n{4.209} & \n{1165.0} & \tl & \tl & \tl & 1 & 0 & \num{-0.7481} & \num{0} & \num{0} & \num{0}\\
            {\bf [G]} & \n{71.340} & \n{2483.55} & \tl & \tl & \tl & \tl & 0 & 2 & \num{-0.721154} & \num{0.0625} & \textcolor{gray}{oom} & \num{0.0625}\\ 
        \midrule
            {\bf [H]} & \n{43.71649992465973} & \n{84.6965000629425} & \n{10144.0} & \tl & \tl & \tl & 9 & 79 & \num{-3.1419192} & \num{0.2103755} & \num{0.2284915} & \num{0.2765}\\
            {\bf [I]} & \n{2638.9030001163483} & \n{2768.3324999809265} & \tl & \tl & \tl & \tl & 0 & 614 & \num{-3.0906634} & \num{0.1111782} & \num{0.1217689} & \num{0.1422}\\
            {\bf [J]} & \n{228.12250006198883} & \n{696.7568332751592} & \tl & \tl & \tl & \tl & 1612 & 37 & \num{-4.0} & \num{0.6268873} & \num{0.650576} & \num{0.8359}\\
        \bottomrule
    \end{tabular}};
    \draw[line width=2.2mm,white] (1.26, 0.08) -- (1.26, 5.34);
    \draw[line width=2.2mm,white] (9.95, 0.08) -- (9.95, 5.34);
    \draw[line width=2.2mm,white] (12.0, 0.08) -- (12.0, 5.34);
    \node[anchor=north] at (5.8,0.1) {\textbf{(a)} --- Comparison of run times};
    \node[anchor=north] at (11.0,0.1) {\textbf{(b)}};
    \node[anchor=north] at (14.5,0.1) {\textbf{(c)}};
\end{tikzpicture}\vspace{-2mm}
\caption{
\textbf{Results for $\opppl=1$.}
\textbf{(a)} Runtime comparison between our algorithm, FTP, and CG-18. The seeded version of our algorithm runs $m=1000$ iterations of CFR+ (\cref{sec:implementation_colgen}), while the non seeded version runs $m=1$. `{\scriptsize $\dagger$}': since the TMECor value for the game is exactly zero, we measure how long it took the algorithm to find a distribution with expected value at least $-\epsilon/10$ for the team.
\textbf{(b)} Number of times the pricing problem for our column-generation algorithm was solved to optimality by the linear relaxation (`Relax') and by the MIP solver (`MIP') when using our column-generation algorithm (seeded version with $m=1000$).
\textbf{(c)} Quality of the initial strategy of the team obtained for  varying sizes of $S$ compared to the expected utility of the team at the TMECor. `{\textcolor{gray}{oom}}': out of memory. }\vspace{-2mm}
\label{tab:colgen opp 1}
\end{table*}

\begin{table*}
    \sisetup{detect-all=true,scientific-notation=fixed,fixed-exponent=0,round-mode=places,
    round-precision=3}
    \setlength{\tabcolsep}{4.5pt}\small\centering
\begin{tikzpicture}
\node[anchor=south west] at (0, 0) {
    \begin{tabular}{c||rr|rrr|r||rr||rrr|r}
        \toprule
            \multirow{2}{*}{\!\bf Game}  & \multicolumn{2}{c|}{\bf Ours} & \multicolumn{3}{c|}{\bf Fictitious Team Play (FTP)} & \bf \multirow{2}{*}{CG-18} & \multicolumn{2}{c||}{\bf Pricers} & \multicolumn{3}{c|}{\bf Team utility after seeding} & \bf TMECor\\
            &Seeded & \!Not seed.& $\epsilon=50\%$ & $\epsilon=10\%$ & $\epsilon=1\%$ & & Relax.\! & MIP & $m = 1$ &  \num[scientific-notation=false]{1000} & \num[scientific-notation=false]{10000} & \multicolumn{1}{c}{\bf value}\\
        \midrule
            {\bf [A]} & \n{0.0} & \n{0.001000086466471354} & \n{0.0}$\mathrlap{^\dagger}$ & \n{19}$\mathrlap{^\dagger}$& \n{189}$^\dagger$& \n{0.147337} & 1 & 0 & \num{-0.6333333} & \num{0.0} & \num{0.0} & \num{0}\\
            {\bf [B]} & \n{0.0} & \n{0.011333346366882324} & \n{99} & \tl & \tl & \n{7.527870} & 1 & 0 & \num{-0.25} & \num{ 0.0265152} & \num{0.0265152} & \num{0.0265} \\
            {\bf [C]} & \n{6.474833329518636} & \n{5.635833303133647} & \n{2888.0} & \tl & \tl & \tl & 6 & 33 & \num{-0.1260534} & \num{0.0267577} & \num{0.033310300000} & \num{0.0380} \\
        \midrule
            {\bf [D]} & \n{0.1443333625793457} & \n{0.36766664187113446} & \n{0.00087} & \tl & \tl & \n{106.067944} & 14 & 0 & \num{-0.384469} & \num{0.2516204} & \num{0.2524217} & \num{0.2524} \\
            {\bf [E]} & \n{0.640999992688497} & \n{0.9039999643961588} & \n{1.38834} & \tl & \tl & \n{750.767358} & 40 & 0 & \num{-3.0} & \num{0.252222199999999} & \num{0.252282} & \num{0.2534} \\
        \midrule
            {\bf [F]} & \n{55.00099992752075} & \n{539.7705000638962} &\n{5455.0} & \tl & \tl & \tl & 21 & 0 & \num{-0.6296296} & \num{0.2561728} & \num{ 0.2561728} & \num{0.2561728} \\
            {\bf [G]} & \tl & \tl & \tl & \tl & \tl & \tl & \na & \na & \num{-0.765625} & \num{0.2640625} & \textcolor{gray}{oom} & \na \\
        \midrule
            {\bf [H]} & \n{450.69799995422363} & \n{485.9849998950958} & \tl & \tl & \tl & \tl & 25 & 335 & \num{-2.0017094} & \num{0.1767536} & \num{0.2008117} & \num{0.3450} \\
            {\bf [I]} & \n{3452.269499897957} & \n{4148.50100004673} & \tl & \tl & \tl & \tl & 1 & 492 & \num{-2.5046703} & \num{0.0958098} & \num{0.109763} & \num{0.1420} \\
            {\bf [J]} & \n{431.86300003528595} & \n{316.2300001382828} & \tl & \tl & \tl & \tl & 2508 & 37 & \num{-7.5} & \num{0.6295301} & \num{ 0.818981} & \num{0.9709} \\
        \bottomrule
    \end{tabular}};
    \draw[line width=2.2mm,white] (1.26, 0.08) -- (1.26, 5.34);
    \draw[line width=2.2mm,white] (9.95, 0.08) -- (9.95, 5.34);
    \draw[line width=2.2mm,white] (12.0, 0.08) -- (12.0, 5.34);
    \node[anchor=north] at (5.8,0.1) {\textbf{(a)} --- Comparison of run times};
    \node[anchor=north] at (11.0,0.1) {\textbf{(b)}};
    \node[anchor=north] at (14.5,0.1) {\textbf{(c)}};
\end{tikzpicture}\vspace{-2mm}
\caption{
\textbf{Results for $\opppl=2$.}
\textbf{(a)} Runtime comparison between our algorithm, FTP, and CG-18. The seeded version of our algorithm runs $m=1000$ iterations of CFR+ (\cref{sec:implementation_colgen}), while the non seeded version runs $m=1$. `{\scriptsize $\dagger$}': since the TMECor value for the game is exactly zero, we measure how long it took the algorithm to find a distribution with expected value at least $-\epsilon/10$ for the team.
\textbf{(b)} Number of times the pricing problem for our column-generation algorithm was solved to optimality by the linear relaxation (`Relax') and by the MIP solver (`MIP') when using our column-generation algorithm (seeded version with $m=1000$).
\textbf{(c)} Quality of the initial strategy of the team obtained for  varying sizes of $S$ compared to the expected utility of the team at the TMECor. `{\textcolor{gray}{oom}}': out of memory. }\vspace{-2mm}
\label{tab:colgen opp 2}
\end{table*}

\subsection*{Comparison between the Algorithm of \cref{sec:small_support} and the Prior State of the Art}

Depending on the cap $n$ on the number or semi-randomized correlation plans, the algorithm we describe in \cref{sec:small_support} might not reach the optimal TMECor value for the team (although, as we argue in \cref{sec:exp}, a very small $n$ already guarantees a large fraction of the optimal value empirically). 

For completeness, we report the run time of the algorithm for a sample instance. We employ instance [H] with $\opppl=3$ as it is has a good trade-off between dimensions and manageability. 
When $n=1$ the algorithm reaches an optimal solution in 9.74s. The optimal solution with $n=1$ achieves $63\%$ of the optimal utility with no restrictions on the number of plans. With $n=2$ the run time is 5m38s and the solution reaches $84\%$ of the optimal value. 

The column-generation algorithm has better run time performances and guarantees to reach an optimal solution without having to pick the right support size. However, we observe that the algorithm of \cref{sec:small_support} already outperforms FTP and CG-18. Specifically, FTP cannot reach a strategy guaranteeing $50\%$ of the optimal utility within the time limit, while our algorithm guarantees $84\%$ of the optimal value within roughly 5 minutes. On the other hand, CG-18 cannot complete even a single iteration within the time limit. This confirms the our pricing formulation is significantly tighter than previous formulations.

\end{document}